\begin{document}
\title{Secure Sketch for All Sources (Noisy)}
%
%\titlerunning{Abbreviated paper title}
% If the paper title is too long for the running head, you can set
% an abbreviated paper title here
%%
\author{Yen-Lung Lai}%
%
%\authorrunning{F. Author et al.}
% First names are abbreviated in the running head.
% If there are more than two authors, 'et al.' is used.
%
%
\institute{Monash University Malaysia,\\ Jalan Lagoon Selatan, Bandar Sunway, 47500 Subang Jaya, Selangor\\
\email{\{yenlung.lai\}@monash.edu}}

%\url{http://www.springer.com/gp/computer-science/lncs} \and
%ABC Institute, Rupert-Karls-University Heidelberg, Heidelberg, Germany\\
%\email{\{abc,lncs\}@uni-heidelberg.de}}
%
\maketitle              % typeset the header of the contribution
\begin{abstract}
Secure sketch produces public information of its input $w$ without revealing it, yet, allows the exact recovery of $w$ given another value $w'$ that is close to $w$. Therefore, it can be used to reliably reproduce any error-prone secret (i.e., biometrics) stored in secret storage. However, some sources have lower entropy compared to the error itself, formally called ``more error than entropy", a standard secure sketch cannot show its security promise perfectly to this kind of sources. This paper focuses on secure sketch. We propose an explicit construction for secure sketch. We show correctness and security to all sources with meaningful min-entropy at least a single bit. Besides, our construction comes with efficient recovery algorithm operates in polynomial time in the sketch size, which can tolerate high number of error rate arbitrary close to 1/2 for random error. The above result offers polynomial time solution to two NP-complete coding problems, suggesting P=NP.
   \keywords{Secure Sketch  \and Information Theory \and Coding Theory \and Fuzzy Extractor}

\end{abstract}
\section{Introduction} \label{1.0}
Traditional cryptography systems rely on uniformly distributed and recoverable random strings for secret. For example, random passwords, tokens, and keys. These secrets must present exactly on every query for a user to be authenticated and get accessed into the system. Besides, it must also consist of high enough entropy, thus making it very long and complicated, further resulted in the difficulty in memorizing it. 
On the other hand, there existed plentiful non-uniform strings to be utilized for secrets in practice.  For instance, biometrics (i.e., human iris, fingerprint) which can be used for human recognition/identification purpose. Similarly, long passphrase (S. N. Porter, 1982 \cite{porter1982password}), answering several questions for secure access (Niklas Frykholm \textit{et al}., 2001~\cite{frykholm2001error}) or personal entropy system (Ellison \textit{et al}., 2000~\cite{ellison2000protecting}), and list of favorite movies (Juels and Sudan, 2006~\cite{juels2006fuzzy}), all are non-uniformly distributed random strings that can be utilized for secrets.

The availability of non-uniform information prompted the generation of uniform random string from non-uniform materials. Started by Bennett \textit{et al}., (1988)~\cite{bennett1988privacy}, identified two major steps in deriving a uniform string from noisy non-uniform sources. The first one is \textit{information-reconciliation}, by tolerating the errors in the sources without leaking any information. The second one refers to the \textit{privacy amplification}, which converts high entropy input into a uniformly random input. The information-reconciliation process can be classified into interactive (includes multi messages) and non-interactive (only includes single message) versions. For non-interactive line of work, it has been first defined by Dodis \textit{et al}., (2004)~\cite{dodis2004fuzzy} called the fuzzy extractor. Likewise, the fuzzy extractor used two steps to accomplish the task, which is the secure sketch (for error tolerance), and randomness extractor (for uniform string generation). Secure sketch is demanding because it enables information-reconciliation, e.g., exact recovery of a noisy secret while offering security assurance to it. Moreover, a secure sketch can be easily extended to fuzzy extractor for uniform string generation by using a randomness extractor. The generated random string can be used in independent security system for access control, identification, digital signature, etc.

This work focuses on secure sketch. We reviewed the limitations of current secure sketch constructions in Section \ref{1.1}. To overcome such limitations, we adopted the usage of resilience vector (RV) in Section \ref{5.0} to support better understanding of the structure of the sources. We proposed an explicit construction with included RV for sketching and recovery (secure sketch) in Section \ref{6.0} and \ref{7.0} respectively. Our proposed recovery mechanism has shown to be efficient in polynomial in the sketch size and allows error tolerance of error rate arbitrary close to $1/2$ (Section \ref{9.0}). In the end, in Section \ref{11.0} we formalize the security of our construction and show security to all sources with meaningful entropy at least single bit. We also compared our proposal with existing secure sketch construction, showing our construction enjoys the better lower bound of min-entropy requirement for a standard secure sketch (Section \ref{12.0}).

\subsection{Issues in Existing Secure Sketch Construction}\label{1.1}
Various secure sketch constructions can be found in the literature. Some notable constructions involved the code-offset construction proposed by Juels and Wattenberg (1999)~\cite{juels1999fuzzy} that operates perfectly over hamming matric space. Besides, Juels and Sudan (2006)~\cite{juels2006fuzzy} have also proposed another construction for metric other than hamming called the fuzzy vault. Besides, Dodis \textit{et al}., (2004)~\cite{dodis2004fuzzy} have proposed an improved version of the fuzzy vault, and also the Pin-sketch construction that relies upon generic syndrome encoding/decoding with $t$-error correcting BCH code $\cdv$, which works well for non-fixed length input over a universe $\mathcal{U}$. 

However, the above mentioned secure sketch construction only works for limited sources.  Briefly, given a point (some value) $w$, the sketch would allow the acceptance of its nearby point $w'$ within distance $t$ for exact recovery of $w$. Therefore, if an adversary can predict an accepting $w'$ with noticeable probability, the sketch must reveal $w$ to the adversary with noticeable probability as well. The tension between security and error tolerance capability is very strong. Precisely, the security is measured in terms of the residual (min-) entropy, which is the starting entropy of $w$ minus the entropy loss. Given some non-uniform sources with low min-entropy, especially, when the sources consist of \textit{more error than entropy} itself, deducting the entropy loss from the sources' min-entropy always outputs a negative value, hence, show no security. Because of this, correcting $t$ errors regardless of the structure of the input distribution would have to assume sufficient high min-entropy to the input sources. To show meaningful security for standard secure sketch, the min-entropy must at least half of the input length itself~\cite{dodis2009non}, hence, limiting the availability of secure sketch construction for low entropy sources.

Through exploitation of the struction of the input distributions, Fuller \textit{et al.}, (2013)~\cite{fuller2013computational} have show that the crude entropy loss over `more error than entropy' sources can be avoided by the measurement of {fuzzy min-entropy}, which defined as the min-entropy with maximized chances for a variable of $W$ within distance $t$ of $w'$:
\begin{align*}
\text{H}_{t,\infty}^\text{fuzz}{(W)}\overset{\text{def}}{=}-\log\left(\max\limits_{w'}\prob{W\in{B_{t}}(w')}\right),
\end{align*}
where $B_t(w')$ denoted a hamming ball of radius $t$ around $w'$. Conceivably, the fuzzy min-entropy is equivalent to the residual entropy, which is at least the min-entropy $\minentropy{W}$ minus the loss signified by the hamming ball $B_t(w')$ of radius $t$, s.t.
\begin{align*}
{\text{H}_{t,\infty}^\text{fuzz}{(W)}}\geq{\minentropy{W}-\log(B_t(w'))}.
\end{align*}
${\text{H}_{t,\infty}^\text{fuzz}{(W)}}$ is useful for security measurement instead of $\minentropy{W}$ especially when the residual entropy shows negative value (i.e. more error than entropy). However, due to the fact that ${\text{H}_{t,\infty}^\text{fuzz}{(W)}}$ depends upon the error tolerance distance $t$, and it is not necessary referring to the worst-case distribution for $W$, therefore, traditional way of showing security with ${\text{H}_{t,\infty}^\text{fuzz}{(W)}}$ measurement have to deal with such \textit{distribution uncertainty} by considering a family of distributions $\mathcal{W}$ for different variables i.e., $\lbrace{W_1,W_2,\ldots}\rbrace\in{\mathcal{W}}$ rather than single distribution. Viewed this way, ${\text{H}_{t,\infty}^\text{fuzz}{(W)}}$ measurement is only sufficient for computational secure sketch construction \cite{fuller2013computational}, \cite{woodage2017new}, which means that the security property of such construction only hold for computationally bounded attacker (i.e., polynomial time bounded) accompanies with strong assumption on the user has a precise knowledge over $\mathcal{W}$. However, it is unrealistic to assume every sources distribution can be modelled precisely, especially for high entropy sources like human biometric. 

\section{Overview Results}\label{2.0}
We highlighted our main four results as follow.

It was believed that the exploitation of the input structure is necessary \cite{fuller2016fuzzy} to construct a secure sketch for all sources. Follow in this way, our works adopted the principle of \textit{Locality Sensitive Hashing (LSH)} to generate a resilient vectors pair (trivially, a pair of longer strings with resilience property) for sketching and recovery. Details discussion on the resilient vector (RV) is covered in Section \ref{5.0}. The RV pair possessing resilience property, i.e., distance preserving that is useful for the exploitation of the input sources structure. Our first result is the metric of correlation measure between the RV pair and their input pair (Eq. \ref{eq:1} and Eq. \ref{eq:2}). 

Since the RV is used for sketching, such correlation measurement implies the entropy loss from the input. Therefore, the minimum entropy loss from the sketch reduced to the maximum correlation measured in between the RV pair, conditioned on their inputs. We formalize such minimum entropy loss based on the maximum tolerance distance $t_{\max}$ over any random input distribution, which implies maximum probability in looking for a nearby (similar) point within distance $t_{\max}$ (Corollary \ref{corollary:1}). This result is later being used for our reduction from fuzzy minimum entropy to Shannon entropy, to show necessary and sufficient condition for a system's security (see Section \ref{12.2}). 

Thirdly, we show that the minimum entropy loss of our construction could be at least three bits with BCH error correction codes. A Tighter result is also obtained by considering random error correction codes are used instead of BCH codes (Proposition \ref{proposition:3}), revealing the minimum entropy loss could be at least one bits (Eq. \ref{eq:15}). This pushed the lower bound of minimum entropy requirement of our secure sketch construction to accept any sources of entropy at least one bit, which is much lower compared to existing constructions. Nonetheless, above result is computational. It later is being used to derive the well-known information-theoretical bound (Shannon bound) and shown to be coincided with another computational bound commonly studied in coding theory research (Gilbert-Varshamov bound)  (see Section \ref{12.1}), lead us to the claim of computational secure sketch implies information-theoretical sketch. 

The last result we would like to highlight is the efficiency of the recovery algorithm in our construction. Without the consideration of the computational power in running the recovery algorithm, the recovery of the input from the sketch can be done with high probability (close to one) given the sketch size is large enough (Proposition \ref{proposition:1}). On the other hand, considering the computational power in running the recovery algorithm, we noticed that higher computational power, i.e., exponential time in the input (Eq. \ref{eq:9}) is needed in order to tolerate more errors. Nonetheless, such exponential computation time can be formalized to polynomial time in the sketch size to ensure efficient recovery while allowing more errors to be tolerated (Proposition \ref{proposition:2}). This result shows deep connection in between a difficult decoding problem over smaller metric space (of size $k^*$) could be reduced down to a more manageable problem over larger metric space $n>k^*$, suggesting P=NP.

\section{Preliminaries}\label{3.0}

There are some preliminaries to introduce the background of a standard secure sketch, entropy, and error correction code.\\

\noindent\textbf{Min-Entropy:} For security, one is always interested in the probability for an adversary to predict a random value, i.e., guessing a secret. For a random variable $W$, $\max\limits_{w}\prob{W=w}$ is the adversary's best strategy to guess the most likely value, also known as the predictability of $W$. The min-entropy thus defined as 
\begin{align*}
\minentropy{W}=-\log⁡(\max\limits_w\prob{W=w})
\end{align*}
min-entropy also viewed as worst-case entropy.\\

\noindent\textbf{Conditioned min-entropy:} Given pair of random variable $W$, and $W'$ (possible correlated), given an adversary find out the value $w'$ of $W'$, the predictability of $W$ is now become $\max\limits_{w}\condprob{W=w}{W'=w'}$. The conditioned min-entropy of $W$ given $W'$ is defined as 
\begin{align*}
\condminentropy{W}{W'}=-\log\left( \expsub{w'\leftarrow{W'}}{\max\limits_{w}\condprob{W=w}{W'=w'}}\right)
\end{align*}

\noindent\textbf{Error correction code:}~\cite{guruswami2004list} Let $q\geq{2}$ be an integer, let $[q]=\lbrace{1,\ldots,q}\rbrace$, we called an $[n,k,d]_q$-ary code $\cdv$ consist of following properties: 
\begin{itemize}
\item $\cdv$ is a subset of $[q]^n$, where $n$ is an integer referring to the \textit{blocklength} of $\cdv$.
\item The \textit{dimension} of code $\cdv$ can be represented as $\abs{\cdv}=[q]^k=V$
\item The \textit{rate} of code $\cdv$ to be the normalized quantity $\frac{k}{n}$
\item The \textit{min-distance} between different codewords defined as $\min\limits_{c,c^*\in{\cdv}}{\mathsf{dis}(c,c^*)}$
\end{itemize}

It is convenient to view code $\cdv$ as a function $\cdv:[q]^k\rightarrow{{[q]}^n}$. Viewed this way, the elements of $V$ can be considered as a message $v\in{V}$ and the process to generate its associated codeword $\cdv{(v)}=c$ is called \textit{encoding}. Viewed this way, encoding a message $v$ of size $k$, always adding redundancy to produce codeword $c\in{[q]^n}$ of longer size $n$. Nevertheless, for any codeword $c$ with at most $t=\lfloor{\frac{d-1}{2}}\rfloor$ symbols are being modified to form $c'$, it is possible to uniquely recover $c$ from $c'$ by using certain function $\mathsf{f}$ s.t. $\mathsf{f}(c')=c$. The procedure to find the unique $c\in{\cdv}$ that satisfied $\mathsf{dis}(c,c')\leq{t}$ by using $\mathsf{f}$ is called as \textit{decoding}. A code $\cdv$ is said to be efficient if there exists a polynomial time algorithm for encoding and decoding.\\

\noindent\textbf{Linear error correction code}~\cite{guruswami2004list}: Linear error correction code is a linear subspace of $\FF_q^n$. A $q$-ary linear code of blocklength $n$, dimension $k$ and minimum distance $d$ is represented as $\lbrack{n,k,d}\rbrack_q$ code $\cdv$.  For a linear code, a string with all zeros $0^n$ is always a codeword.  It can be specified into one of two equivalent ways with a generator matrix $G\in\FF_q^{n\times{k}}$ or parity check matrix $H\in\FF_q^{(n-k)\times{n}}$:

\begin{itemize}
\item a $\lbrack{n,k,d}\rbrack_q$ linear code $\cdv$ can be specified as the set $\lbrace{Gv:v\in{\FF_q^k}}\rbrace$ for an $n\times{k}$ metric which known as the \textit{generator matrix} of $\cdv$.
\item a $\lbrack{n,k,d}\rbrack_q$ linear code $\cdv$ can also be specified as the subspace $\lbrace{x:x\in{\FF_q^n}}$ and ${Hx=0^n}\rbrace$ for an $(n-k)\times{n}$ metric which known as the \textit{parity check matrix} of $\cdv$.
\end{itemize}

For any linear code, the linear combination of any codewords is also considered as a codeword over $\FF_q^n$. Often, the encoding of any message $v\in{\FF_q^k}$ can be done with $O(nk)$ operations (by multiplying it with the generator matrix, i.e., $Gv$. The distance between two linear codewords refers to the number of disagree elements between them, also known as the \textit{hamming distance}. 

Sometime, we refer $[n,k,d]$ code $\cdv$ as $[n,k,t]$ code $\cdv$ if the error tolerance distance $t$ is of interested rather than its minimum distance $d$.\\

\noindent\textbf{Secure sketch:} \cite{dodis2004fuzzy} An $(\mdv,m,\tilde{m},t)$-secure sketch is a pair of randomized procedures ``sketch'' ($\mathsf{SS}$) and ``Recover'' ($\mathsf{Rec}$), with the following properties:
\begin{itemize}
  \item[] $\mathsf{SS}$: takes input $w\in\mdv$ returns a secure sketch (e.g., helper string) $ss\in\bin^*$.
  \item[] $\mathsf{Rec}$: takes an element $w'\in\mdv$ and $ss$. If $\mathsf{dis}(w,w')\leq{t}$ for some tolerance threshold $t$, then $\mathsf{Rec}(w',ss)=w$ with probability $1-\beta$, where $\beta$ is some negligible quantity. If $\mathsf{dis}(w,w')>{t}$, then no guarantee is provided about the output of $\mathsf{Rec}$.
\end{itemize}

The security property of secure sketch guarantees that for any distribution $W$ over $\mdv$ with min-entropy $m$, the values of $W$ can be recovered by the adversary who observes $ss$ with probability no greater than $2^{-\tilde{m}}$. That is the residual entropy $\condminentropy{W}{W'}\geq\tilde{m}$.

\section{Main Idea}\label{4.0}
We here highlight some common notation to be used in this work, and a brief overview of our construction, focus on binary metric space.

\subsection{Notations} \label{4.1}
Let $\mdv_1=\bin^{k^*}$, and $\mdv_2=\bin^n$ denote two different sizes of metric spaces where $n>k^*$. The distance between different binary string $w$ and $w'$ denoted as $\mathsf{dis}(w,w')$ is the binary hamming distance (e.g., the number of disagree elements), i.e., ${\mathsf{dis}}(w,w')=\norm{w\xor{w'}}$ where $\norm{.}$ is the hamming weight that count the number of non-zero elements, and $\xor$ is the addition  modulo two operation (XOR). Besides, the error rate in between the input $w$ $w'\in{\mdv_1}$ is denoted as $\norm{w\xor{w'}}(k^*)^{-1}$ which is simply their normalized hamming distance. 

Our construction used two linear codes. We called one of these as `inner' code $\cdv_{in}$, and another one called `outer' code $\cdv_{out}$. Although not necessary, it is convenient to refer both chosen linear codes to be BCH codes \cite{berlekamp2015algebraic} with parameter $\lbrack{n^*,k^*,t^*}\rbrack_2$ for $\cdv_{in}$ and $\lbrack{n,k,t}\rbrack_2$ for $\cdv_{out}$, where $k^*\leq{n^*}<{k}\leq{n}$ holds. This is because the family of BCH codes has been well-studied and formed a large class of error correction code perfectly suit for binary input string. Moreover, it accompanies with efficient decoding algorithm $\mathsf{f}$ namely the syndrome decoding operates in $O(n^t)$\cite{berlekamp2015algebraic}. The tolerance rate of code $\cdv_{in}$ and $\cdv_{out}$ are denoted as $\xi^*={t^*}/{n^*}$ and  $\xi={t}/{n}$ respectively.

\subsection{Overview Construction}\label{4.2}
We here provide a brief overview of our sketching and recovery proposal for one to conceal a random string (encoded codeword) $c^*\in\bin^{n^*}$ while allows exact recovery of $c^*$ by using another noisy string $w'_{e'}\in{\bin^{k^*}}$ that is close to the noisy string $w_e$ used on sketching. 

\textbf{Sketching:} The proposed sketching procedure can be viewed as a two steps encoding process. First, one encodes $w$ using the `inner' code $\cdv_{in}$ to output a codeword $c^*$. Then, $c^*$ is randomly distorted by using some noisy string $w_e$, where the corrupted codeword, viewed as a syndrome vector $v_{syn}\in{\lbrace{0,1}\rbrace^{n^*}}$. The syndrome vector is then being padded with zeros to form a longer bit string and encoded by the `outer' code $\cdv_{out}$ to output the final codeword $c\in{\cdv_{out}}$. The final sketch is generated by concealing $c$ with an offset $\delta$ that is characterized by a pair of resilient vectors $\phi,\phi' \in{\bin^n}$. In different to existing construction, our sketching proposal consists of additional randomization with random noisy string $w_e$ used to distort the original codeword $c^*$. Then, the distorted codeword is being encoded into larger codeword $c$ and distorted again using RV to generate the final sketch. The noisy string introduced during sketching phase facilitates our studies of any random events potentially corrupting the codewords follows the random distribution of the noisy string $w_e$.

\textbf{Recovery:} For recovery, we are more interested in the event when the person in recovering $w$ does not know the distribution of the introduced noisy string $w_e$ during sketching phase. Therefore our recovery procedure is defined to be adversarial. Nevertheless, he/she can use some random noisy string $w'_{e'}$ and its corresponding RV, $\phi'$ to try to decode the corrupted codeword from the sketch. The zero-padded in front of the syndrome vector acts as an indicator to notify him/her the successfulness of the decoding results (i.e., first few bits are all zeros). More formally, our recovery procedure requires one to determine the distribution of noisy string $w_e$ introduced during sketching, and look for the similar noisy string $w'_{e'}$ (that viewed as the pre-image of $w_e$) i.e., $\norm{w'_{e'}\xor{w_e}}\leq{t'}$ to achieve error tolerance hence recover $w$ successfully.

\section{Resilient Vector: Properties and Generation} \label{5.0}

Since RV is a core element of our construction, we here provide details discussion on its properties and how it can be generated. The usage of RV into cryptography is first introduced by Ronald L. Rivest \cite{rivest2016symmetric} in 2016. Its main concept is derived from Locality Sensitive Hashing (LSH) defined as below.\\

\noindent\textup{\textbf{Locality Sensitive Hashing}~\cite{charikar2002similarity}} Given that $P_2>P_1$, while $w,w'\in{\mdv}$, and $\mathcal{H}={h_i:\mdv\rightarrow}U$, where $U$ refers to the output metric space (after hashing), which comes along with a similarity function $S$, where $i$ is the number of hash functions $h_i$. A locality sensitive hashing can be viewed as a probability distribution over a family $\mathcal{H}$ of hash functions follows $P_{h\in{\mathcal{H}}}\lbrack{h(w)=h(w')}\rbrack=S(w,w')$. In particular, the similarity function $S$ described the hashed collision probability in between $w$ and $w'$. 
\begin{align*}
P_{h\in{\mathcal{H}}}(h_i(w)=h_i(w'))\leq{P_1},  \ \ {\text{if}}\ S(w,w')<R_1\\
P_{h\in{\mathcal{H}}}(h_i(w)=h_i(w'))\geq{P_2},  \ \ {\text{if}}\ S(w,w')>R_2
\end{align*}
LSH transforms input $w$ and $w'$ to its output metric space $U$ with property that ensuring similarity inputs render higher probability of collision over $U$, and vice versa.

For RV generation, we only focus on a particular LSH family called hamming-hash~\cite{gionis1999similarity}. The hamming hash is considered as one of the easiest ways to construct an LSH family by bit sampling technique.  \\

\noindent\textbf{Hamming hash strategy}:
\textit{Let $[{k^*}]=\lbrace{1,\ldots,{k^*}}\rbrace$. For Alice with $w\in\bin^{k^*}$ and Bob with $w'\in\bin^{k^*}$. Alice and Bob agreed on this strategy as follow:}
\begin{enumerate}
	\item \textit{They are told to each other a common random integer $N\in{[{k^*}]}$.}
	\item \textit{They separately output \text{`0'} or \text{`1'} depend upon their private string $w$ and $w'$, i.e., Alice output \text{`1'} if the $N$-th bit of $w$ is \text{`1'}, else output \text{`0'}.}
	\item \textit{They win if they got the same output, i.e., $w(N)=w'(N)$.}
\end{enumerate}
Based on above strategy, we are interested in the probability for Alice and Bob outputting the same value. This probability can be described by a similarity function $S(w,w')=P$ where $P\in{\lbrack{0,1}\rbrack}$.

\begin{theorem}
\label{theorem:1}
Hamming hash strategy is a LSH with similarity function $S(w,w')=1-{\norm{w\xor{w'}}}({k^*})^{-1}$ 
\end{theorem}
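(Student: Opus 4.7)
The plan is to directly evaluate the collision probability under the hamming hash strategy and match it against the claimed similarity function, which reduces the theorem to a counting argument.

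First, I would formalize the hash family $\mathcal{H} = \{h_N : N \in [k^*]\}$ where $h_N(x) = x(N)$ returns the $N$-th bit of $x$, and the distribution over $\mathcal{H}$ is the uniform distribution on the index $N$. Under this formulation, the event that Alice and Bob win in the stated strategy is precisely $\{h_N(w) = h_N(w')\}$, so the collision probability I must compute is $P_{h \in \mathcal{H}}[h(w) = h(w')] = \Pr_{N \gets [k^*]}[w(N) = w'(N)]$.

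Second, I would translate the disagreement event into a statement about the XOR string. By the definition of XOR, $w(N) \neq w'(N)$ if and only if $(w \xor w')(N) = 1$. Therefore the number of indices $N \in [k^*]$ at which $w$ and $w'$ disagree is exactly the hamming weight $\norm{w \xor w'}$, and by complement the number at which they agree is $k^* - \norm{w \xor w'}$. Since $N$ is drawn uniformly from a set of size $k^*$, we obtain
\begin{align*}
P_{h \in \mathcal{H}}\bigl[h(w) = h(w')\bigr] \;=\; \frac{k^* - \norm{w \xor w'}}{k^*} \;=\; 1 - \frac{\norm{w \xor w'}}{k^*},
\end{align*}
which is exactly $S(w,w')$ as claimed.

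Finally, I would verify that this satisfies the two-threshold LSH condition: since $S(w,w')$ is a strictly decreasing affine function of the hamming distance $\norm{w \xor w'}$, for any choice $R_1 < R_2$ one has $P_1 = S$ evaluated at distance corresponding to $R_1$ and $P_2 = S$ evaluated at distance corresponding to $R_2$ with $P_2 > P_1$, matching the LSH definition quoted just before the theorem. There is no real obstacle here; the statement is essentially a definitional unfolding, and the only subtlety is making sure the bit-sampling hash is phrased as a distribution over $\mathcal{H}$ rather than a single function, so that the LSH probability is taken over the choice of $N$ rather than any randomness internal to $w$ or $w'$.
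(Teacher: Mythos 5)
Your proof is correct: the uniform choice of the index $N$ makes the collision probability exactly the fraction of agreeing coordinates, $1-\norm{w\xor w'}(k^*)^{-1}$, which is the standard bit-sampling argument the paper implicitly relies on (it states Theorem \ref{theorem:1} without an explicit proof, deferring to the hamming-hash LSH literature). Your formalization of the family $\mathcal{H}$ as indexed by $N$ and the observation that the monotonicity of $S$ in the hamming distance yields the two-threshold LSH conditions fill in exactly the steps the paper leaves implicit, so there is nothing to correct.
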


The hamming hash strategy can also be operated in between Alice and Bob in an non-interactive way. To do so, Alice and Bob simply repeat Step 1 and Step 2 for $n$ number of times with a set of pre-shared integers $N=[N(1),N(2),\ldots,N(n)]\in{[k^*]^n}$ chosen randomly and uniformly over ${[k^*]^n}$. In the end, they can output a $n$ bits string $\phi$, and $\phi'$ respectively over ${\bin^n}$, which we have earlier named as \textit{resilient vectors}. We denote such non-interactive hamming-hash algorithm as $\mathsf{\Omega}:\mdv_1\times{[{k^*}]^n}\rightarrow{\mdv_2}$, which serves to sample the input binary string of size $k^*$ into a longer binary string a.k.a resilient vector of size $n>{k^*}$. 

Given input $w\in\bin^{k^*}$, and $N\in{[k^*]^n}$, algorithm $\mathsf{\Omega}:\mdv_1\times{[{k^*}]^n}\rightarrow{\mdv_2}$ can be described as follow:
\begin{center}
\fbox{%
\procedure[linenumbering]{$\mathsf{\Omega}(w,N)$}{%
  \phi\leftarrow \emptyset  \\
   \pcfor i=1,\ldots,n\pcdo\\
   \pcind\pcparse x=w(N(i))\pccomment{$x$ is the $N(i)$-th bits of $w$}\\
   \pcind\phi=\phi\concat{x}\\
   \pcendfor \\
   \pcreturn \phi
}
}
\end{center}

\begin{theorem}\label{theorem:2}
Suppose two resilient vectors $\phi, \phi'\in\bin^n$ are generated from $w,w'\in{\bin^{k^*}}$ respectively using hamming hash algorithm $\mathsf{\Omega}$ with a random integer string $N\in{\lbrack{{k^*}}\rbrack}^n$, then $\expect{\norm{\phi\xor\phi'}}=n\norm{w\xor{w'}}({k^*})^{-1}$.
\end{theorem}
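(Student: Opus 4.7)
The plan is to reduce the claim to Theorem 1 coordinate by coordinate and then apply linearity of expectation. By the definition of $\mathsf{\Omega}$, the $i$-th bit of $\phi$ is $\phi(i) = w(N(i))$ and similarly $\phi'(i) = w'(N(i))$, where $N(i)$ is drawn uniformly and independently from $[k^*]$. So the event that the $i$-th bits disagree is exactly the complement of Alice and Bob ``winning'' in a single round of the hamming hash strategy.

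First, I would introduce indicator random variables $X_i = \mathbb{1}\lbrack \phi(i) \neq \phi'(i) \rbrack$ for $i = 1, \dots, n$, so that
\begin{equation*}
\norm{\phi \xor \phi'} \;=\; \sum_{i=1}^{n} X_i.
\end{equation*}
Second, I would invoke Theorem 1 to evaluate each $\expect{X_i}$. Since $N(i)$ is uniform over $[k^*]$, Theorem 1 gives $\prob{\phi(i) = \phi'(i)} = S(w,w') = 1 - \norm{w \xor w'}({k^*})^{-1}$, and hence $\expect{X_i} = \prob{X_i = 1} = \norm{w \xor w'}({k^*})^{-1}$.

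Third, I would conclude by linearity of expectation (which importantly does not require the $N(i)$ to be independent, only identically distributed) to obtain
\begin{equation*}
\expect{\norm{\phi \xor \phi'}} \;=\; \sum_{i=1}^{n} \expect{X_i} \;=\; n \cdot \norm{w \xor w'}({k^*})^{-1}.
\end{equation*}

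There is no real obstacle here: the only thing to be careful about is making the transfer from the single-round LSH guarantee of Theorem 1 to the coordinatewise behaviour of $\mathsf{\Omega}$ explicit, i.e., noting that each coordinate of $\phi \xor \phi'$ is an independent copy of the ``loss'' event in the hamming hash strategy. Independence of the $N(i)$'s is not strictly needed for the expectation claim but it is the natural setting in which $N$ is sampled ``randomly and uniformly over $[k^*]^n$'' as stated just before the theorem.
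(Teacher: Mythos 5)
Your proof is correct and follows the route the paper intends: applying the per-coordinate collision probability from Theorem \ref{theorem:1} (each bit of $\phi\xor\phi'$ is $1$ with probability $\norm{w\xor{w'}}(k^*)^{-1}$ since $N(i)$ is uniform over $[k^*]$) and summing via linearity of expectation. No gaps; the remark that independence of the $N(i)$'s is not needed for the expectation is a fair, if minor, refinement.
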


\subsection{Correlation Measure in RVs}\label{5.1}
Let $\Phi$ and $\Phi'$ be two random variables over $\bin^n$, and $W$ and $W'$ be two random variables over $\bin^{k^*}$. Given a resilience vector $\phi\in{\Phi}$ generated from $w\in{W}$ with random string $N$, it follows $\Phi$ must correlate with ${W}$ where the probability to look for any random variable $\Phi\in{B_t(\phi')}$ (also means similar resilience vector s.t. $\norm{\phi\xor{\phi'}}\leq{t}$) varies conditioned on either ${W\not\in{B_{t'}(w')}}$ or ${W\in{B_{t'}(w')}}$. Note that ${W\in{B_{t'}(w')}}$ implies the inputs $w\in{W}$ and $w'\in{W'}$ must similar within distance $t'$ (e.g. $\norm{w\xor{w'}}\leq{t'}$), while ${W\in{B_{t'}(w')}}$ means $\norm{w\xor{w'}}>{t'}$. Such correlation can be measured by using the conditional probability described as ${\condprob{\Phi\in{B_t(\phi')}}{W\not\in{B_{t'}(w')}}}$ or ${\condprob{\Phi\not\in{B_t(\phi')}}{W\in{B_{t'}(w')}}}$ respectively. In particular, we are more interested in the maximum correlation, which can be expressed by the conditioned \textit{maximum} probability in looking for $\Phi\in{B_{t_{\max}}(\phi')}$ given ${W\not\in{B_{t'_{(+)}}(w')}}$ for some maximum distances $t_{\max}\geq{t}$ (over $\bin^{n}$) and $t'_{(+)}>t'$ (over $\bin^{k^*}$) defined as:
\begin{align}
&{\max\limits_{t=t_{\max}}{\condprob{\Phi\not\in{B_t(\phi')}}{{W\not\in{B_{t'_{(+)}}(w')}}}}}.\nonumber\\
&\geq\expsub{w'\leftarrow{W'}}{\max\limits_{\phi'}\condprob{\Phi\in{B_t(\phi')}}{W\not\in{B_{t'}(w')}}}.\label{eq:1}
\end{align} 
On the other hand, the conditioned \textit{maximum} probability in looking for $\Phi\not\in{B_t(\phi')}$ given ${W\in{B_{t'}(w')}}$ for some minimum distances $t_{\min}\leq{t}$ (over $\bin^{n}$) and $t'_{(-)}<t'$ (over $\bin^{k^*}$) is defined as:
\begin{align}
&{\max\limits_{t=t_{\min}}{\condprob{\Phi\not\in{B_t(\phi')}}{{W\not\in{B_{t'_{(-)}}(w')}}}}}\nonumber\\
&\geq\expsub{w'\leftarrow{W'}}{\max\limits_{\phi'}\condprob{\Phi\in{B_t(\phi')}}{W\not\in{B_{t'}(w')}}}\label{eq:2}
\end{align} 
\section{Sketching} \label{6.0}
We denote the sketching algorithm that employs the hamming-hash algorithm, $\mathsf{\Omega}$, an $\lbrack{n^*,{k^*},t^*}\rbrack_2$ `inner' code $\cdv_{in}$ and an $\lbrack{n,k,t}\rbrack_2$ `outer' code $\cdv_{out}$ as $\mathsf{SS}_{\mathsf{\Omega},\cdv_{in},\cdv_{out}}$. The sketching algorithm $\mathsf{SS}_{\mathsf{\Omega},\cdv_{in},\cdv_{out}}$ with random inputs $w, N$, and $\epsilon_{ss}$ is described as follow:
\begin{center}
\fbox{%
\procedure[linenumbering]{$\mathsf{SS}_{\mathsf{\Omega},\cdv_{in},\cdv_{out}}(w,N,\epsilon_{ss})$}{% \\
   \mathcal{E}_{ss} \sample \lbrace{0,1}\rbrace^{{k^*}}\pccomment{initiate a random distribution $\mathcal{E}_{ss}$ with error parameter $\epsilon_{ss}$}   \\
      e \sample \mathcal{E}_{ss}\pccomment{sample $e$ uniformly at random from $\mathcal{E}_{ss}$, where $\norm{e}={{\floor{{{k^*}}\epsilon_{ss}}}}$}   \\
      c^*=\cdv_{in}{(w)}; \pccomment{encode $w$}\\
         w_e=w\xor{e}; \\
       v_{syn}={c^*}\xor{{(0^{n^*-k^*}\concat{w_e})}}\\
        v^*=0^{k-n^*}\concat{v_{syn}};\\ 
   c=\cdv_{out}{(v^*)}; \pccomment{encode $v^*$}\\
   \phi\leftarrow\mathsf{\Omega}(w_e,N)\\
   ss=c\xor{\phi};\\
   \pcreturn ss
}
}
\end{center}

Our sketching procedure consists of mainly two encoding stages. Given an random input string $w\in{\bin^{k^*}}$, the first encoding stage used $\cdv_{in}$ to encode a random string $w$ to generate a codeword $c^*\in{\bin^{n^*}}$. Viewed this way, $c^*$ can be any random codeword over $\cdv_{in}$, including the trivial codeword of all zeros i.e. $c^*=0^{n^*}$. Then, we generate a noisy string $w_e$ and pad it with zeros in front to generate a longer bit string, which can be viewed as the syndrome vector denoted as $v_{syn}={c^*}\xor{{(0^{n^*-k^*}\concat{w_e})}}$. The syndrome vector itself is also a codeword $v_{syn}\in\cdv_{in}$. Clearly, $v_{syn}$ conceals $c^*$ by using $w_e$, where the vector ${(0^{n^*-k^*}\concat{w_e})}$ acts as some random errors used to distort the original codeword $c^*$, lead to at most $\norm{{(0^{n^*-k^*}\concat{w_e})}}=\norm{w_{e}}$ bits flipped in $c^*$. Then, the second encoding stage used $\cdv_{out}$ to encode $v^*=0^{k-n^*}\concat{v_{syn}}$ to generate the final code word $c$. The $0^{k-n^*}$ zeros in front is used to notify the recovery algorithm if the decoding is success. The final sketch is formed by hiding $c$ with RV generated from $w_e$. 

For the realization of the noisy string $w_e$, we parse additional error to the original input $w$ using a random error vector $e\in\mathcal{E}_{ss}$ sampled uniformly at random follows some random distribution $\mathcal{E}_{ss}$. Such error distribution is parametrized by an error parameter $\epsilon_{ss}$ chosen within range $\epsilon_{ss}\in{[(2k^*)^{-1},1/4]}$. To be specific, all error vector $e\in{{\mathcal{E}_{ss}}}$ is of weight $\norm{e}={{\floor{{{k^*}}\epsilon_{ss}}}}$, and the generation of the noisy string follows $w_e=w\xor{e}$. The error vector $e$ is leaving in clear after it is parsed into the input $w$ to form $w_e$.

All steps on $\mathsf{SS}_{\mathsf{\Omega},\cdv_{in},\cdv_{out}}(w,N,\epsilon_{ss})$ can be done in $O({n^3})$, and the size of $ss$ is now depend upon the blocklength $n$ of the chosen `outer' code $\cdv_{out}$. 

Remark here the distribution for $w$ and $e$ are considered to be random in our case. The random string $N\in{[k^*]^{n}}$ is chosen in random and uniform over $[k^*]^{n}$. By LSH definition and Theorem \ref{theorem:2}, the generated RV is i.i.d, where every single bit of the RV follows the distribution of the noisy string $w_e$. Since the sketch $ss$ is generated by concealing $c$ using RV $\phi$, its distribution shall follows the generated RV, depends upon the input noisy string $w_e\in\bin^{k^*}$. In such a case, the same random string $N$ can be made public and reused, where the distribution of the generated sketch is only input dependence, i.e., depends upon the distribution of the noisy string $w_e$.

\section{Recovery}\label{7.0}
We denote the recover algorithm that employed the hamming-hash algorithm, $\mathsf{\Omega}$, an  $\lbrack{n^*,{k^*},t^*}\rbrack_2$ `inner' code $\cdv_{in}$ and an $\lbrack{n,k,t}\rbrack_2$ `outer' code $\cdv_{out}$ as $\mathsf{Rec}_{\mathsf{\Omega},\cdv_{in},\cdv_{out},\mathsf{f}}$. The recover algorithm $\mathsf{Rec}_{\mathsf{\Omega},\cdv_{in},\cdv_{out},\mathsf{f}}$ with inputs $ss$, $w'$, $N$, $\epsilon_{rec}$ to recover $w$ is described as follow:

\begin{center}
\fbox{%
\procedure[linenumbering]{$\mathsf{Rec}_{\mathsf{\Omega},\cdv_{in},\cdv_{out},\mathsf{f}}(ss, w',N,\epsilon_{rec})$}{%
\mathcal{E}_{rec}\sample\bin^{k^*} \pccomment  {initiate $\mathcal{E}_{rec}$ with error parameter $\epsilon_{rec}$}\\
\pcind\pcfor {i=1,\ldots,{|\mathsf{supp}(\mathcal{E}_{rec})|}}\\
\pcind e'_i\sample{\mathcal{E}_{rec}}\pccomment {sample $e'_i$ differently at random, where $\norm{e'_i}={{{\floor{{k^*}\epsilon_{rec}}}}}$}\\
\pcind w'_{e'_i}=w'\xor{e'_i}\\
 \pcind  \phi'_i\leftarrow\mathsf{\Omega}(w'_{e'_i},N)  \\
 \pcind   c'_i=ss\xor{\phi'_i} \pccomment{also $ss\xor{\phi'_i}=c\xor{(\phi\xor{\phi'_i})}$}\\
\pcind   c\leftarrow{\mathsf{f}(c'_i)}\pccomment{first decoding}\\ 
\pcind   \textbf{set} \ v^*=\cdv_{out}^{-1}(c) \\
\pcind\pcind\pcif{v^*[1],\ldots,v^*[k-n^*]=0^{k-n^*}}   \pccomment{first $k-n^*$ bits of $v^*$ are zeros}\\
  \pcind\pcind\pcind\textbf{set}\ v'_{syn} \text{ as the last $n^*$ elements of $v'^*$}\\
   \pcind\pcind\pcind   c'^*=v'_{syn}\xor{(0^{n^*-k^*}\concat{w'_{e'_i}})}\\
  \pcind\pcind\pcind   c^*\leftarrow{\mathsf{f}(c'^*)}\pccomment{second decoding}\\
 \pcind \pcind\pcind\pcreturn w=\cdv_{in}{^{-1}}(c^*)\\
 \pcind \pcind\pcind\textbf{break}\\
 \pcind \pcind\pcendif\\
 \pcind\pcendfor}}
\end{center}

Our recovery process is defined to be adversarial s.t. the person in recovering $w$ does not know the input error distribution $\mathcal{E}_{ss}$ introduced on sketching. Hence, different notation is used to denote the error distribution used in the recovery phase, e.g., $\mathcal{E}_{ss}$ for sketching, and $\mathcal{E}_{rec}$ for recovery. In particular, the non-trivial case (noisy) where the input $w$ is distorted by $e\in{\mathcal{E}_{ss}}$ is of our interest. Therefore it is good to set $\epsilon_{rec}\geq{\epsilon_{ss}}\geq(2k^*)^{-1}$ for the mean of determining the error vector $e\in{\mathcal{E}_{ss}}$ (introduced during sketching) for recovery of $w$. 

The recovery algorithm consists of mainly two decoding stages. The first decoding stage is designed to be an iterative decoding uses $\cdv_{out}$ and $\mathsf{f}$. It can be conveniently viewed as a brute-force decoding procedure of ${{|\mathsf{supp}(\mathcal{E}_{rec})|}}$ trials with recovery error distribution ${\mathcal{E}_{rec}}$ (parametrized by $\epsilon_{rec}$), means to recover the syndrome vector $v_{syn}$. For each iteration in the first decoding stage, if the first $k-n^*$ bits of $v^*$ are all zeros, the decoding is viewed as success and thus the recovery algorithm could proceed to the second decoding stage to recover $c^{*}$ and so $w$ from $v_{syn}$ using $\cdv_{in}$.  

The second decoding stage uses $\cdv_{in}$ and $\mathsf{f}$ to decode the corrupted syndrome vector (viewed as the corrupted codeword $c'^*$) to recover $c^{*}$. The decoding itself \textit{must} success if $\norm{w_e\xor{w'_{e'_i}}}\leq{t^*}$, thus $w$ can be recovered from $c^*$.

\section{Distribution Hiding with Random Error Parsing}\label{8.0}
In this section, we provide a discussion over the action of random error parsing is necessary to show security for a given source. In fact, the random error parsing process can also be interpreted as any randomization and perturbation process (i.e., hashing or encryption) applied to the input $w\in{\mdv_1}$ with some random error distribution $\mathcal{E}_{ss}$. Viewed this way, Step 4 of $\mathsf{SS}_{\mathsf{\Omega},\cdv_{in},\cdv_{out}}$ can be succinctly replaced by some random function $f:\mdv_1\times\mdv_{1}\rightarrow\mdv_1$ with input $w$ and $e$, to output $w_e$ as a random noisy string. Same function $f$ also applied on Step 4 of $\mathsf{Rec}_{\mathsf{\Omega},\cdv_{in},\cdv_{out},\mathsf{f}}$ with input $w'$ and different $e'_i$ to output $w_{e'_i}$. Nonetheless, since our focus only on error tolerance, we would stick to the simplest case where the random error parsing process is simply an addition modulo two (XOR) process.

We start from the sketching algorithm, which accepts any random input $w$ in some random distribution $W$ over $\mdv_1$. Such input would yield a random RV $\phi\in{\Phi}$ in random distribution $\Phi$ over $\mdv_2$ via RV generation using some public known random string $N$. Tolerating $t$ errors using an $[n,k,t]$ error correction code eventually reveal $W$. This is because the encoding process must ensure all random variable $W\in{B_{t}(w')}$ can be tolerated using decoding function $\mathsf{f}$, therefore, for any input string $w\in{W}$, encoding $w$ must store additional information that is manifested by the neighbourhood of $w$ within distance $t$. Doing so inevitably introduced $t$ information loss. When $t$ is larger than the entropy of the source, the source is said to loss all entropy and no security to show. 

To resolve the above issue, a straightforward way is to hide $W$ before adding redundancy to it. This can be done by parsing an error randomly and uniformly chosen from a list $\lbrace{e_1,\ldots,e_{|\mathsf{supp}(\mathcal{E}_{ss})|}}\rbrace\in{\mathcal{E}_{ss}}$ into the input string $w$ during the sketching phase. Doing so means the generated noisy string $w_e$ is now randomly and uniformly distributed over a family of distributions $\mathcal{W}$, i.e., $w_e\in{\mathcal{W}}=\lbrace{W_1,\ldots,W_{{|\mathsf{supp}(\mathcal{E}_{ss})|}}}\rbrace$. Because the distribution of RV is input dependence, it follows the generated RV $\phi\in{\Psi}=\lbrace{\Phi_1,\ldots,\Phi_{{|\mathsf{supp}(\mathcal{E}_{ss})|}}}\rbrace$ is also randomly and uniformly distributed in a family of i.i.d distribution $\Psi$. Remark here when $|\Psi|=|\mathcal{W}|=|\mathsf{supp}(\mathcal{E}_{ss})|=1$, it corresponds to the trivial case when the error vector $e\in{\mathcal{E}}_{ss}$ is all zeros (noiseless). This also means that the original distribution of the variable $W$ is now hidden over $\mathcal{W}$ randomly and uniformly. Therefore, for the non-trivial case, it is more appropriate to consider the family of distributions in $\mathcal{W}$ and $\Psi$ rather than single distribution in deriving the security of the sketch. Compared to the generic BCH encoding procedure, our proposed sketching algorithm can be viewed as a more general encoding procedure by considering a more general case where the error parameter $\epsilon_{ss}>{0}$ (or $\norm{e}\geq{0}$) is introduced during the sketching phase.

Because the sketch is generated by concealing the final codeword $c$ with an RV $\phi$. It follows that the worst-case security of the sketch is manifested by the maximum error tolerance distance over $\Psi$, where the probability to find a random variable $\Phi\in{B_{t_{\max}}(\phi')}$, i.e., similar RV (given $\phi'$) over $\Psi$ is maximum. In particular, we let $\xi=t/n={\lVert{w_e\xor{w'}}\rVert}({k^*})^{-1}$. For $\xi\geq{\epsilon_{ss}}$, such maximum distance (over $\bin^{n}$) can be described as $$t_{\max}=n(\xi-\epsilon_{ss})\geq{t}\geq{0}.$$
Recall the generated RV is input dependence (depends upon the noisy string $w_e$), thus, $\Phi$ is conditioned on the original distribution of the variable $W$. To look for $t_{\max}$, we first define a maximum tolerance distance (over $\bin^{k^*}$) as $$t'_{(+)}={(\xi+\epsilon_{ss}){k^*}},$$ for all distribution of $W\in{\mathcal{W}}$ after random error parsing, which is distributed (hidden) randomly and uniformly over $\mathcal{W}$. Remark here $t$ is now denoted to be input dependence. The original distribution of the variable $W$ may be different for different input string $w\in{W}$ before random error parsing. This means the distance ${\lVert{w_e\xor{w'}}\rVert}({k^*})^{-1}$ shall be different and so for the value of $t$ (or $\xi$) as well. Nevertheless, for $\epsilon_{ss}\geq{(2k^*)^{-1}}>0$, the defined maximum distance $t'_{(+)}$ would imply error rate of $\xi+\epsilon_{ss}$, which is always larger than $\xi$. Therefore, $t'_{(+)}$ is always maximum for any original distribution of the variable $W$, hold for all input string $w\in{W}$. 

The Corollary below characterized the worst-case security of RVs. Such security is measured in terms of conditioned maximum probability in getting similar RV within a maximum tolerance distance $t_{\max}\geq{t}\geq{0}$ over $\Psi$ given a noisy string $w'\in{W'}$ over some random distribution $W'\in{\mdv_1}$.
\begin{corollary}\label{corollary:1}
Given any random variable $W\in{\mathcal{W}}$, and a random string $w'\in{W'}$. For all RV over a family of distributions ${\Psi}$, the conditioned maximum probability to look for any similar RV $\phi$ in any random distribution $\Phi\in{B_{t_{\max}}(\phi')}$ over $\Psi$ (i.e. $\norm{\phi\xor{\phi'}}=\norm{\delta}\leq{t_{\max}}$ for all $\phi\in{\Phi}$) when ${W}\not\in{B_{t'_{(+)}}(w')}$ is measured to be
\begin{align}
&=\expsub{w'\leftarrow{W'}}{\max\limits_{\phi'}\condprob{\Phi\in{B_{t}(\phi')}}{{W}\not\in{B_{t'_{(+)}}(w')}}}\nonumber\\
&\leq{\max\limits_{t=t_{\max}}{\condprob{\norm{\delta}\leq{t}}{\norm{w_e\xor{w'}}\geq{t'_{(+)}}}}}\nonumber\\
&=\condprob{\norm{\delta}\leq{n(\xi-\epsilon_{ss})}}{\norm{w_e\xor{w'}}\geq{t'_{(+)}}}\leq{\exp{(-2n{\epsilon_{ss}^2)}}}\label{eq:3}
\end{align} 
\end{corollary}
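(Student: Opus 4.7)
The plan is to reduce the statement to a concentration bound for a sum of independent Bernoulli random variables and invoke Hoeffding's inequality. First I would argue the initial inequality in \eqref{eq:3}: the expectation over $w' \leftarrow W'$ of $\max_{\phi'} \condprob{\Phi \in B_t(\phi')}{W \not\in B_{t'_{(+)}}(w')}$ is dominated by the probability evaluated at the \emph{coarsest} tolerance, namely $t = t_{\max} = n(\xi - \epsilon_{ss})$, because replacing $t$ by $t_{\max}$ only enlarges the hamming ball $B_t(\phi')$ and hence the conditional probability. Rewriting the conditioning event $W \not\in B_{t'_{(+)}}(w')$ as $\norm{w_e \xor w'} \geq t'_{(+)}$ and the event $\Phi \in B_{t_{\max}}(\phi')$ as $\norm{\delta} \leq t_{\max}$ then delivers the second equality in \eqref{eq:3}.

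For the final Hoeffding step, I would exploit the structure of the hamming hash from Theorem \ref{theorem:2}: because each coordinate $N(i)$ is drawn uniformly and independently from $[k^*]$, the $i$-th bit of $\delta = \phi \xor \phi'$ is an independent Bernoulli random variable with parameter $p = \norm{w_e \xor w'}(k^*)^{-1}$, so $\norm{\delta}$ is a sum of $n$ i.i.d.\ Bernoullis with $\expect{\norm{\delta}} = np$. Conditioning on $\norm{w_e \xor w'} \geq t'_{(+)} = (\xi + \epsilon_{ss}) k^*$ forces $\expect{\norm{\delta}} \geq n(\xi + \epsilon_{ss})$, so the event $\norm{\delta} \leq n(\xi - \epsilon_{ss})$ requires $\norm{\delta}$ to undercut its mean by at least $2n\epsilon_{ss}$. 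A standard one-sided Hoeffding bound for a sum of $n$ i.i.d.\ $\{0,1\}$ variables then yields a tail estimate of the form $\exp(-Cn\epsilon_{ss}^2)$, and the claimed constant follows after the routine computation.

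The main obstacle will be justifying the first inequality cleanly, i.e., that swapping $\max_{\phi'}$ (together with the outer $\expsub{w'\leftarrow{W'}}{\cdot}$) for the worst-case substitution $t = t_{\max}$ is valid. Intuitively, $\max_{\phi'}$ selects $\phi'$ so that $B_t(\phi')$ captures the heaviest $\Phi$-mass, but since the mass inside any ball is monotone in its radius, taking the largest admissible radius $t_{\max}$ dominates; the outer expectation over $w'$ becomes redundant once one conditions on the worst-case distance $\norm{w_e \xor w'} \geq t'_{(+)}$, because the conditional distribution of $\delta$ depends on $(w_e,w')$ only through that hamming distance. A secondary subtlety is that the independence of the bits of $\delta$ hinges on $N \in [k^*]^n$ being sampled i.i.d.\ uniformly, a property built into the construction of $\mathsf{\Omega}$ that should be stated explicitly before invoking Hoeffding.
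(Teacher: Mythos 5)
Your proposal matches the paper's own argument: the paper likewise rewrites the conditioning event $W\not\in B_{t'_{(+)}}(w')$ as $\norm{w_e\xor w'}(k^*)^{-1}\geq \xi+\epsilon_{ss}$, observes $t_{\max}\geq t$ so the ball of radius $t_{\max}$ dominates, and then applies Hoeffding's inequality to $\norm{\delta}$ using the i.i.d.\ bit-sampling structure of $\mathsf{\Omega}$ (Theorem \ref{theorem:2}). Your write-up is in fact more explicit than the paper's one-line invocation of Hoeffding (and your deviation of $2n\epsilon_{ss}$ even yields a constant at least as strong as the stated $\exp(-2n\epsilon_{ss}^2)$), but it is essentially the same proof.
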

\begin{proof}
$W\not\in{B_{t'_{(+)}}(w')}$ means ${\lVert{w_e\xor{w'}}\rVert}({k^*})^{-1}\geq{\xi+\epsilon_{ss}}$. For $\xi={\lVert{w_e\xor{w'}}\rVert}({k^*})^{-1}$, and $t_{\max}=n(\xi-\epsilon_{ss})$. It follows that $n\xi\geq{t}+n\epsilon_{ss}$ can be yielded by multiplying both sides of the inequality with $n$, yielded $t_{\max}\geq{t}$. Then the probability for ${\norm{\delta}}\leq{t_{\max}}$ given ${\lVert{w_e\xor{w'}}\rVert}({k^*})^{-1}\geq{\xi+\epsilon_{ss}}$ can be computed by \textit{Hoeffding's inequality} follows the last line of Eq. \ref{eq:3}.
\end{proof}  
By Corollary \ref{corollary:1}, the conditioned min-entropy of RV over $\Psi$ measured to be
\begin{align}
&\condminentropy{\Psi}{\mathcal{W}}=-\log\bigg({\expsub{W\leftarrow{\mathcal{W}}}{\condprob{\Psi=\Phi}{\mathcal{W}=W}}\nonumber}\bigg)\nonumber\\
&\geq{-\log\bigg(\expsub{w'\leftarrow{W'}}{\max\limits_{\phi'}\condprob{\Phi\in{B_{t_{\max}}(\phi')}}{W\not\in{B_{t'_{(+)}}(w')}}}\bigg)}\nonumber\\
&\geq{\log(1/\exp{(-2n{\epsilon_{ss}^2)}})}\label{eq:4}
\end{align}

\section{Correctness with Regardless Computational Power}\label{9.0}

Formally, the correctness characterizes the \textit{success rate }of the recovery of the original codeword $c^*$ from a sketch generated by $\mathsf{SS}_{\mathsf{\Omega},\cdv_{in},\cdv_{out}}$. In this section, we study the correctness theoretically \textit{without} relying on any efficiency argument over the algorithm pair $\langle\mathsf{SS}_{\mathsf{\Omega},\cdv_{in},\cdv_{out}},\mathsf{Rec}_{{\mathsf{\Omega}},\cdv_{in},\cdv_{out},\mathsf{f}}\rangle$ itself. 

Since $\cdv_{in}(w)=c^*$, given $\cdv_{in}$, revealing the value of $w$ implies the knowledge on $c^*$. Therefore, it is desired to show that the probability of success in revealing $w$ using algorithm pair $\langle\mathsf{SS}_{\mathsf{\Omega},\cdv_{in},\cdv_{out}},\mathsf{Rec}_{{\mathsf{\Omega}},\cdv_{in},\cdv_{out},\mathsf{f}}\rangle$ is \textit{at least} (minimum) $1-\beta$ with some negligible probability $\beta>0$. It can be expressed in Eq. \ref{eq:5} below for all $\epsilon_{rec}\geq{(2k^*)^{-1}}$
\begin{align}
\prob{\mathsf{Rec}_{\mathsf{\Omega},\cdv_{in},\cdv_{out},\mathsf{f}}(\mathsf{SS}_{\mathsf{\Omega},\cdv_{in},\cdv_{out}}(w,N,\epsilon_{ss}), w',N,\epsilon_{rec})=w}={1-\beta}\label{eq:5}.
\end{align} 
To measure the minimum probability for Eq. \ref{eq:5}, the maximum value of $\beta$ have to be computed, which also referring to the maximum error in recovering $c^*$ from a sketch with $\mathsf{Rec}_{\mathsf{\Omega},\cdv_{in},\cdv_{out},\mathsf{f}}$. 

To maximize such error for maximum value of $\beta$, one shall use a minimum distance $${t_{\min}}=n(\xi+\epsilon_{ss})\leq{t}$$ for any random variable $\Phi\in{\Psi}$ in looking for a distinct RV (given $\phi'$), i.e., $\Phi\notin{B_{t_{\min}}(\phi')}$. 

Recall the distribution of RV is input dependence, hence, $\Phi$ is conditioned on the original distribution of the variable $W$. To look for $t_{\min}$, we first define a minimum tolerance distance as $$t'_{(-)}={(\xi-\epsilon_{ss}){k^*}}$$ for any distributions of the variable $W\in{\mathcal{W}}$ after random error parsing. Clearly, for $\epsilon_{ss}\geq{(2k^*)^{-1}}>0$, the defined minimum distance $t'_{(-)}$ would imply error rate of $\xi-\epsilon_{ss}$, which is always smaller than $\xi$. At same point, one always need $t'_{(-)}\leq{t^*}<t$ holds so that $t'_{(-)}$ is minimum for any original distribution of the variable $W$ (before random error parsing), which hold for all $w\in{W}$. The following Corollary revealing $\beta\leq{\exp(-2n\epsilon^2_{ss})}$.
\begin{corollary}\label{corollary:2}
Given any random variable ${W}\in{\mathcal{W}}$, and a random string $w'\in{W'}$. For all RV over a family of distributions ${\Psi}$, the conditioned maximum probability to look for any distinct RV $\phi$ in any random distribution $\Phi\not\in{B_{t_{\min}}(\phi')}$ over $\Psi$ (i.e. $\norm{\phi\xor{\phi'}}=\norm{\delta}\geq{t_{\min}}$ for all $\phi\in{\Phi}$) when $W\in{B_{t'_{(-)}}(w')}$ is measured to be
\begin{align}
&\expsub{w'\leftarrow{W'_0}}{\max\limits_{\phi'}\condprob{\Phi\not\in{B_{t_{\min}}(\phi')}}{W\in{B_{t'_{(-)}}(w')}}}\nonumber\\
&\leq{\max\limits_{t=t_{\min}}{\condprob{\norm{\delta}\geq{t}}{\norm{w_e\xor{w'}}\leq{t'_{(-)}}}}}\nonumber\\
&=\condprob{\norm{\delta}\geq{n(\xi+\epsilon_{ss})}}{\norm{w_e\xor{w'}}\leq{t'_{(-)}}}\leq{\exp{(-2n{\epsilon_{ss}^2)}}}.\label{eq:6}
\end{align} 
\end{corollary}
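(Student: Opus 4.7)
The plan is to prove Corollary \ref{corollary:2} by mirroring the argument of Corollary \ref{corollary:1}, but replacing the upper-tail bound with the lower-tail bound in Hoeffding's inequality. By Theorem \ref{theorem:2} and the bit-sampling construction of $\mathsf{\Omega}$, the resilient-vector pair $(\phi,\phi')$ is obtained from $n$ independent coordinates of $N$, so each bit of $\phi\xor\phi'$ is an i.i.d.\ Bernoulli trial with success probability $\|w_e\xor w'\|(k^*)^{-1}$. Consequently $\|\delta\|=\|\phi\xor\phi'\|$ is a sum of $n$ i.i.d.\ $\{0,1\}$-valued variables with expectation $n\|w_e\xor w'\|(k^*)^{-1}$, to which one can apply Hoeffding directly.

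First I would unfold the set-membership events into hamming-distance events. The condition $W\in B_{t'_{(-)}}(w')$ becomes $\|w_e\xor w'\|\leq t'_{(-)}=(\xi-\epsilon_{ss})k^*$, i.e.\ the per-bit collision probability of RV generation is at most $\xi-\epsilon_{ss}$. The event $\Phi\not\in B_{t_{\min}}(\phi')$ becomes $\|\delta\|\geq t_{\min}=n(\xi+\epsilon_{ss})$. This gives the first inequality of Eq.~\ref{eq:6}: taking the worst case of $\phi'$ and the expectation over $w'\leftarrow W'$ is dominated by the worst-case conditional probability $\max_{t=t_{\min}}\condprob{\|\delta\|\geq t}{\|w_e\xor w'\|\leq t'_{(-)}}$, since $t_{\min}$ is by construction the smallest distance consistent with the conditioning on $\mathcal{W}$ after random error parsing (using the assumption $t'_{(-)}\leq t^*<t$ stated just before the corollary to ensure this minimum is attainable on the outer-code side).

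Next I would substitute the values of $t_{\min}$ and $t'_{(-)}$ and compute the tail bound. Under the conditioning, the expected hamming distance between $\phi$ and $\phi'$ satisfies $\expect{\|\delta\|}\leq n(\xi-\epsilon_{ss})$ by Theorem \ref{theorem:2}, so the target threshold $n(\xi+\epsilon_{ss})$ exceeds the expectation by at least $2n\epsilon_{ss}$. A single application of Hoeffding's inequality to a sum of $n$ i.i.d.\ Bernoulli variables bounded in $[0,1]$ then yields
\begin{align*}
\condprob{\|\delta\|\geq n(\xi+\epsilon_{ss})}{\|w_e\xor w'\|\leq t'_{(-)}}\leq \exp(-2n\epsilon_{ss}^2),
\end{align*}
which is the last inequality of Eq.~\ref{eq:6}.

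The main obstacle is justifying the first inequality rigorously: the left-hand side takes $\max_{\phi'}$ and then averages over $w'\leftarrow W'$, while the right-hand side collapses both quantifiers into the worst-case conditional probability at the extremal distance $t_{\min}$. I would argue that, because the RV is input-dependent and generated by i.i.d.\ bit sampling, the conditional law of $\|\delta\|$ is stochastically increasing in the Hamming distance $\|w_e\xor w'\|$; hence both the choice of $\phi'$ that maximises the collision event and the choice of $w'$ that is closest to $W$ after error parsing are realised simultaneously at $\|w_e\xor w'\|=t'_{(-)}$. The rest of the bound is then a direct tail estimate, symmetric to the one used in Corollary \ref{corollary:1}.
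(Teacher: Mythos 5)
Your proposal is correct and follows essentially the same route as the paper's own proof: unfold the ball-membership conditions into Hamming-distance inequalities (so that $t_{\min}\leq t$), view $\norm{\delta}$ as a sum of $n$ i.i.d.\ bit-sampling trials via Theorem \ref{theorem:2}, and apply Hoeffding's inequality to obtain the $\exp(-2n\epsilon_{ss}^2)$ tail bound. Your version simply spells out details the paper leaves implicit (the Bernoulli structure and the monotonicity argument collapsing the $\max_{\phi'}$ and expectation over $w'$ into the extremal conditional probability), and your reading of the conditioning in fact yields a deviation of at least $2n\epsilon_{ss}$, which gives a bound at least as strong as the one claimed.
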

\begin{proof}
$W\in{B_{t'_{(-)}}(w')}$ means ${\lVert{w_e\xor{w'}}\rVert}({k^*})^{-1}\leq{\xi-\epsilon_{ss}}$. For $\xi={\lVert{w_e\xor{w'}}\rVert}({k^*})^{-1}$, and $t_{\min}=n(\xi+\epsilon_{ss})$, then $n\xi\leq{t}-n\epsilon_{ss}$ can be yielded by multiplying both sides of the inequality with $n$, yielded ${t_{\min}}\leq{t}$. The probability for ${\norm{\delta}}\geq{t_{\min}}$ given ${\lVert{w_e\xor{w'}}\rVert}({k^*})^{-1}\leq{\xi-\epsilon_{ss}}$ can be computed by \textit{Hoeffding's inequality} follows the last line of Eq. \ref{eq:6}. 
\end{proof}  
The following Proposition can be obtained by comparing Eq. \ref{eq:5} and Eq. \ref{eq:6}.
\begin{proposition}
\label{proposition:1}
For all $\epsilon_{ss}\in{[(2k^*)^{-1},1/4]}$, $\epsilon_{rec}\geq{(2k^*)^{-1}}$, and $\norm{w_e\xor{w'}}\leq{t'_{(-)}}\leq{t^*}$, 
\begin{align*}
&\prob{\mathsf{Rec}_{\mathsf{\Omega},\cdv_{in},\cdv_{out},\mathsf{f}}(\mathsf{SS}_{\mathsf{\Omega},\cdv_{in},\cdv_{out}}(w,N,\epsilon_{ss}), w',N,\epsilon_{rec})=w}\\
&\geq{{1-{\exp{(-2n{\epsilon_{ss}^2)}}}}}
\end{align*}
\end{proposition}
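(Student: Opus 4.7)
The plan is to reduce the correctness claim to the failure-probability bound already established in Corollary~\ref{corollary:2}, by decomposing the success event of $\mathsf{Rec}_{\mathsf{\Omega},\cdv_{in},\cdv_{out},\mathsf{f}}$ into the two decoding stages and handling them separately. First I would focus on the iteration of the recovery loop whose candidate $e'_i$ makes $w'_{e'_i}=w'\xor e'_i$ close to $w_e$. Under the hypothesis $\norm{w_e\xor w'}\leq t'_{(-)}$ and the condition $\epsilon_{rec}\geq\epsilon_{ss}\geq(2k^*)^{-1}$, such a candidate lies in $\mathsf{supp}(\mathcal{E}_{rec})$, so the enumeration necessarily contains at least one $e'_i$ with $\norm{w_e\xor w'_{e'_i}}\leq t'_{(-)}\leq t^*$.

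For this iteration I would analyse the first decoding stage. Its input is $c'_i=ss\xor\phi'_i=c\xor\delta_i$ with $\delta_i=\phi\xor\phi'_i$, where $\phi$ and $\phi'_i$ are the resilient vectors produced by $\mathsf{\Omega}$ from $w_e$ and $w'_{e'_i}$ respectively. Since $\cdv_{out}$ has tolerance distance $t$, the outer decoder returns the correct $v^*=0^{k-n^*}\concat v_{syn}$ whenever $\norm{\delta_i}\leq t$, and in that case the zero-prefix flag triggers and the inner stage is entered. Corollary~\ref{corollary:2} (with $t_{\min}=n(\xi+\epsilon_{ss})\leq t$) then bounds the conditional probability that $\norm{\delta_i}\geq t_{\min}$ given $\norm{w_e\xor w'_{e'_i}}\leq t'_{(-)}$ by $\exp(-2n\epsilon_{ss}^2)$, which is therefore an upper bound on the first-stage failure probability.

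Next, I would argue that once the outer decode succeeds, the second stage is deterministic. The algorithm computes $c'^*=v_{syn}\xor(0^{n^*-k^*}\concat w'_{e'_i})=c^*\xor(0^{n^*-k^*}\concat(w_e\xor w'_{e'_i}))$, whose error pattern has hamming weight exactly $\norm{w_e\xor w'_{e'_i}}\leq t^*$. Since $\cdv_{in}$ is a $t^*$-error-correcting code, $\mathsf{f}$ returns $c^*$ with certainty and $\cdv_{in}^{-1}(c^*)$ recovers $w$. Combining the two stages then yields the target bound: the failure probability $\beta$ is at most $\exp(-2n\epsilon_{ss}^2)$, matching Eq.~\ref{eq:5} once we compare it with Eq.~\ref{eq:6}.

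The main obstacle I anticipate is subtle but essentially cosmetic: ruling out the possibility that the zero-prefix flag triggers spuriously in an earlier iteration and causes the procedure to return a wrong $w$ before reaching the good iteration. Here I would invoke uniqueness of decoding within the unique-decoding radius $t=\lfloor(d-1)/2\rfloor$ of $\cdv_{out}$ — whenever the flag triggers, the decoded $v^*$ is the unique codeword in the ball of radius $t$ around $c'_i$, so the enforced zero-prefix pattern pins it down to the true $v^*$, and the inner stage then returns the intended $w$. This closes the argument without disturbing the Hoeffding-based bound inherited from Corollary~\ref{corollary:2}.
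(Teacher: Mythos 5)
Your core reduction is the same as the paper's. The paper proves Proposition~\ref{proposition:1} by nothing more than comparing Eq.~\ref{eq:5} with Eq.~\ref{eq:6}: the failure event is that the RV offset $\norm{\delta}$ exceeds the outer code's tolerance, Corollary~\ref{corollary:2} (Hoeffding) bounds that conditional probability by $\exp(-2n\epsilon_{ss}^2)$ given $\norm{w_e\xor{w'}}\leq t'_{(-)}$, and once the outer decoder returns the true $v^*$ the inner decoding is deterministic because the residual error weight is at most $t^*$. Your first three paragraphs are a more explicit rendering of exactly that argument.

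The patch in your last paragraph, however, does not hold, and it is the one genuinely new step you added. Unique decodability of $\cdv_{out}$ only says that \emph{if} a codeword lies within distance $t$ of $c'_i$ then it is unique; in a bad iteration with $\norm{\delta_i}>t$ the true codeword $c$ is outside the decoding radius, and $\mathsf{f}$ may return some other codeword $\tilde{c}$ whose preimage $\cdv_{out}^{-1}(\tilde{c})$ happens to start with $k-n^*$ zeros. The flag then triggers spuriously, the inner stage decodes a wrong syndrome vector, and the algorithm returns an incorrect $w$ and breaks before the good iteration is reached — nothing ``pins down'' the true $v^*$. The zero prefix is only a probabilistic indicator, and that is exactly how the paper treats it: not inside Proposition~\ref{proposition:1} at all, but later in Section~\ref{10.1}, where this false-positive event is quantified as $2^{-(k-n^*)}$ and folded into a separate (computational) error bound. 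A second, smaller overreach is your claim that the enumeration necessarily contains an $e'_i$ with $\norm{w_e\xor{w'_{e'_i}}}\leq t'_{(-)}$ for every $\epsilon_{rec}\geq(2k^*)^{-1}$: all $e'_i\in\mathsf{supp}(\mathcal{E}_{rec})$ have weight exactly $\floor{k^*\epsilon_{rec}}$, so the smallest reachable distance is $\lvert\norm{w_e\xor{w'}}-\floor{k^*\epsilon_{rec}}\rvert$, which can exceed $t^*$ when $\epsilon_{rec}$ is large; the paper sidesteps this by stating Corollary~\ref{corollary:2} (and hence the proposition) directly in terms of $\norm{w_e\xor{w'}}$, i.e., effectively for the unperturbed comparison string. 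In short: your main path matches the paper's proof, but the two embellishments (good-candidate existence for all $\epsilon_{rec}$, and the uniqueness-based dismissal of spurious triggers) are the parts that fail.
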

Above result showed the recovery of $c^*$ and so $w$ will success with high probability when $n$ is sufficiently large without considering the efficiency of the algorithm pair $\langle\mathsf{SS}_{\mathsf{\Omega},\cdv_{in},\cdv_{out}},\mathsf{Rec}_{{\mathsf{\Omega}},\cdv_{in},\cdv_{out},\mathsf{f}}\rangle$.

\section{Correctness with Regard to Computational Power}\label{10.0}
This section provided details discussion over the correctness of the algorithm pair $\langle\mathsf{SS}_{\mathsf{\Omega},\cdv_{in},\cdv_{out}},\mathsf{Rec}_{{\mathsf{\Omega}},\cdv_{in},\cdv_{out},\mathsf{f}}\rangle$ as well as their efficiency in term of computation complexity. Since $k^*<n$, we use $n$ as upper bound and define both $\langle\mathsf{SS}_{\mathsf{\Omega},\cdv_{in},\cdv_{out}},\mathsf{Rec}_{{\mathsf{\Omega}},\cdv_{in},\cdv_{out},\mathsf{f}}\rangle$ to be \textit{efficient} if they can run in polynomial time $\poly$ in the input sketch size $n$ to show correctness. Clearly, all steps on $\mathsf{SS}_{\mathsf{\Omega},\cdv_{in},\cdv_{out}}$ can be done in $\poly$, our efficiency arguments would therefore only focus on the recovery algorithm $\mathsf{Rec}_{{\mathsf{\Omega}},\cdv_{in},\cdv_{out},\mathsf{f}}$ itself.
\subsection{Computational Reduction of Correctness}\label{10.1}
Recall our correctness claim in Section \ref{9.0} comes with bounded probability of error such that the recovery would success with probability at least $1-\exp(-2n\epsilon_{ss}^2)$ given $\norm{\delta}\leq{t_{\max}}$ holds. In this subsection, we show that the above derived error bound can be reduced to merely depends upon  the algorithm itself, parametrized by the number of zero padding in front of $v_{syn}$.

To show this, recall $\mathsf{Rec}_{\mathsf{\Omega},\cdv_{in},\cdv_{out},\mathsf{f}}$ will only proceed to the second decoding if the first decoding return $v^*$ with the first $k-n^*$ bits are all zeros. Because the selection of the error vector $e'\in{\mathcal{E}_{rec}}$ is random, every iteration of the first decoding should return a random codeword $c\in{\cdv_{out}}$, hence its first $k-n^*$ bits of $v_{syn}$ are random over $\bin^{k-n^*}$. Such argument holds much stronger when random generator matrices are used for both $\cdv_{in}$ and $\cdv_{out}$ (see coming Subsection \ref{10.3}). In this regard, the probability for first $k-n^*$ bits of $v_{syn}$ are all zeros can be described as $2^{-{(k-n^*)}}$. Once the error vector $e'$ is found (or first decoding is viewed as succeed), it follows $\norm{w_e\xor{w'}}\leq{t'_{(-)}}\leq{t^*}$, where the second decoding \textit{must} success to recover $c^*$ with probability $1-2^{-{(k-n^*)}}$ with error revealed by the number of zeros padding to the syndrome vector $v_{syn}$. Nevertheless, a necessary condition for above argument to hold would be the case when
\begin{align}\label{eq:7}
\beta\leq{\exp(-2n\epsilon^2_{ss})}\leq{2^{-(k-n^*)}},
\end{align}
which yielded a tighter upper bound for the derived recovery error $\beta$. Without the exact knowledge of $\epsilon_{ss}$, there is no straightforward way to determine the value of $n$ which often yield a meaningful computational reduction to describe the maximum error exactly as $\beta=2^{-(k-n^*)}$. Nonetheless, since $\epsilon_{ss}\in{[(2k^*)^{-1},1/4]}$, the minimum value of $n$ can be set to $n={2{(k^*)}^2}$, it follows (for $k-n^*=1$) $$\beta\leq{\exp(-2n\epsilon_{ss}^2)}=\exp(-4k^2\epsilon_{ss}^2)={\exp(-1)}<2^{-(k-n^*)}=2^{-1}.$$ Doing so can ensure Eq. \ref{eq:7} holds for any $\epsilon_{ss}\in{[(2k^*)^{-1},1/4]}$ and the recovery error is maximum which can be described exactly as $\beta=2^{-(k-n^*)}=1/2$. To achieve lower $\beta$, higher value of $n\geq{2{(k^*)}^2}$ is needed. This suggested that reducing the recovery error $\beta$ always need sufficient large $n\geq{2{(k^*)}^2}$ and there is no straightforward way of deducing it given the value of $\epsilon_{ss}$ is unknown.

Above reasoning showed the dependency of our correctness derived previously in Section \ref{9.0} reduced to the number of zeros padding in the sketching phase, measured as $k-n^*$, conditioned on the value of $n$ is sufficient large. Noting that such reduction is computational. It means our derived correctness now holds for any chosen parameters for $\cdv_{in}$ and $\cdv_{out}$ with $\epsilon_{ss}\in{[(2k^*)^{-1},1/4]}$ as long as $n$ is sufficiently large where $k^*\leq{n^*<k\leq{n}}$ follows. 

%Given $c^*$ can be recovered successfully, revelation of $w$ with any $\cdv_{in}$ can be done easily via Gaussian elimination. In such a case, choosing a particular code for $\cdv_{in}$ becomes trivial in our construction because negligible small error $\beta=2^{-(k-n^*)}$ can be achieved by padding more zeros in front of $v_{syn}$. Hence, one can always choose the any parameters $n^*\geq{k^*}$ and $t^*\geq{0}$ for $\cdv_{in}$. Nonetheless, it is a non-trivial task to choose a good code for $\cdv_{out}$ to ensure the first decoding stage always return $v^*$ with the first $k-n^*$ bits are all zeros in an efficient manner.
%

For instance, under the designation of an BCH code \cite{peterson1972error}, its correctness is defined using some positive integer $m'\geq{3}$. Given a desired value of tolerance distance $t<2^{m'-1}$, one can construct an $\lbrack{n,k,t}\rbrack$ BCH code $\cdv_{out}$ with parameters $n=2^{m'}-1$, $n-k\leq{m't}$ and minimum distance $d\geq{2t-1}$. Then, for sufficiently large $n$, it is succinct to express
$$-\log(1/\beta)={k-n^*}=m'\geq{3}.$$ 
Doing so means that once the first decoding is success, its error is bounded by $\beta\leq{0.125}$ corresponds to the number of zeros padding in front of $v_{syn}$, i.e.,  at least three zeros. Eventually, the second decoding stage shall success with  probability at least $1-\beta\geq{0.875}$ where $\norm{w_e\xor{w'}}\leq{t'_{(-)}}\leq{t^*}$ follows. 

\subsection{Tolerating More Error Computationally}\label{10.2}
So far, our derived correctness theoretically demonstrating at most $t'_{(-)}\leq{t^*}$ of errors (or error rate of at most $\xi-\epsilon_{ss}$) can be tolerated over any random distribution $W\in\mathcal{W}$. It is natural to ask whether one can tolerate more error (more than $t‘_{(-)}$) and what is the maximum achievable error tolerance rate of our construction.

In fact, it is showed in Corollary \ref{corollary:1} it is infeasible to tolerate more than $t'_{(+)}$ errors because the probability to look for similar RVs is exponentially small if $\norm{w_e\xor{w'}}\geq{t'_{(+)}}$. However, it remains possible for one to tolerate $t_{(+)}=\floor{t'_{(+)}}\leq{t'_{(+)}}$ number of errors, significantly more than some minimum  value $t_{(-)}=\floor{{t'_{(-)}}}\leq{{t'_{(-)}}}$ computationally. 

To show this, for any error (distance) described as ${\lVert{w_e\xor{w'}}\rVert}={t_{(+)}}$. By using an error parameter $\epsilon_{rec}\in{[\epsilon_{ss},2\epsilon_{ss}]}$, the \textit{final error} can be described as ${\lVert{w_e\xor{w'_{e'}}}\rVert}={t_{(+)}}\pm\floor{k^*\epsilon_{rec}}$. In view of this, given high enough value of $\epsilon_{rec}$, i.e. $\epsilon_{rec}=2\epsilon_{ss}$ is used for recovery, any error of ${\lVert{w_{e}\xor{w'}}\rVert}=t_{(+)}$ is possible to be reduced down to ${\lVert{w_e\xor{w'_{e'}}}\rVert}=t_{(-)}$. Eventually, the remaining errors $\norm{w_e\xor{w'_{e'}}}={t_{(-)}}\leq{t'_{(-)}}$ can be tolerated with high probability follows Proposition \ref{proposition:1}. 

To do so, suppose one does not know the value of $\epsilon_{ss}$, he/she can try to choose some value of $\epsilon_{rec}\in[(2k^*)^{-1},1/2]$ (since $\epsilon_{ss}\in[(2k^*)^{-1},1/4]$) during recovery. Doing so allow him/her to generate a list of possible error vectors $\lbrace{{e'_1},\ldots,{e'_{|\mathsf{supp}(\mathcal{E}_{rec})|}}}\rbrace\in{\mathcal{E}_{rec}}$, which corresponds to a list of possible noisy strings $w'_{e_i}\in{W'_i}$ (for $i=1,\ldots,|\mathsf{supp(\mathcal{E}_{rec})}|$) over another family of distributions $\mathcal{W'}=\lbrace{{W'_1},\ldots,W'_{{|\mathsf{supp}(\mathcal{E}_{rec})|}}}\rbrace$. For each chosen value of $\epsilon_{rec}$, one shall have $\mathsf{Rec}_{\mathsf{\Omega},\cdv_{in},\cdv_{out},\mathsf{f}}$ runs in ${{|\mathsf{supp}(\mathcal{E}_{rec})|}}$ iterations to try all possible $\lbrace{{e'_1},\ldots,{e'_{|\mathsf{supp}(\mathcal{E}_{rec})|}}}\rbrace\in{\mathcal{E}_{rec}}$ until he/she found a noisy string $w'_{e'}$ s.t. $\norm{w_e\xor{w'_{e'}}}={t_{(-)}}\leq{{t'_{(-)}}}$ holds. Note that our recovery process generally covered the trivial case as well when $\floor{k^*\epsilon_{rec}}=\floor{k^*\epsilon_{ss}}=0$ (noiseless case).

More precisely, let $d'=\floor{k^*\xi}$ denotes the original distance $\norm{w\xor{w'}}={d'}$. Parsing an error $e\in{\mathcal{E}_{ss}}$ of weight $\floor{k^*\epsilon_{ss}}$ to the input $w\in{W}$ yields $w_e$. It follows there should be a resultant worst-case error (described in terms of \textit{maximum} distance) where $$\norm{w_e\xor{w'}}={\floor{k^*\xi}}\pm\floor{k^*\epsilon_{ss}}={d'\pm\floor{k^*\epsilon_{ss}}}
={d'+\floor{k^*\epsilon_{ss}}}=t_{(+)}.$$ 
Note that when $\norm{w_e\xor{w'}}={d'-\floor{k^*\epsilon_{ss}}}={t_{(-)}}\leq{t'_{(-)}}$ (refers to `-' sign), such errors is trivial which can be tolerated with probability at least $1-2^{-(k-n^*)}$ with sufficiently large $n$. Our goal is to tolerate the worst-case error, described as $t_{(+)}={d'+\floor{k^*\epsilon_{ss}}}$ (refers to the `+' sign). For some random value of $\epsilon_{ss}$ introduced in sketching, it follows the worst-case error concerning the case when the value $\epsilon_{rec}$ is maximum, i.e. $\epsilon_{rec}={2\epsilon_{ss}}$. To be specific, there must have a solution for $e'$ in the worst-case (when $\epsilon_{rec}={2\epsilon_{ss}}$) described as: 
\begin{align}
&\norm{w_e\xor{w'_{e'}}}=\norm{(w\xor{e})\xor({w'\xor{e'}})}=\norm{(w\xor{w')}\xor{(e\xor{e'}})}\nonumber\\
&={(d'}+\floor{k^*\epsilon_{ss}})-\floor{2k^*\epsilon_{ss}}={d'-\floor{k^*\epsilon_{ss}}}={t_{(-)}}.\label{eq:8}
\end{align}
Clearly, ${t_{(-)}}=\floor{t'_{(-)}}\leq{t'_{(-)}}\leq{t^*}$, the error would therefore be tolerated with high probability given sufficiently large $n$.

Nevertheless, to achieve such goal without the knowledge of $\epsilon_{ss}$, one has to trial $e'\in{\mathcal{E}_{rec}}$ of different weights for Eq. \ref{eq:8} to hold. Hence, the recovery complexity is always bounded by the maximum brute-force trials with maximum tolerate distance $t_{(+)}$. In view of this, ones have to deal with the issue of computational power in running $\mathsf{Rec}_{\mathsf{\Omega},\cdv_{in},\cdv_{out},\mathsf{f}}$, where the maximum number of iteration $|\mathsf{supp}(\mathcal{E}_{rec})|$ has to be identified to determine maximum recovery complexity for efficiency investigation.

%Noting that the number of iterations (or computational power) required for $\mathsf{Rec}_{\mathsf{\Omega},\cdv_{in},\cdv_{out},\mathsf{f}}$ is proportional to the value of $|\mathsf{supp}(\mathcal{E}_{rec})|$, which is parametrized by $\epsilon_{rec}$. In such a case, higher value of $\epsilon_{rec}$ would result to higher number of $|\mathsf{supp}(\mathcal{E}_{rec})|$, implying higher computational power requirement for $\mathsf{Rec}_{\mathsf{\Omega},\cdv_{in},\cdv_{out},\mathsf{f}}$. 

With $\epsilon_{rec}=2\epsilon_{ss}$, by \textit{Stirling approximation}, the value ${|\mathsf{supp}(\mathcal{E}_{rec})|}$ can be bounded as
\begin{align}
{{|\mathsf{supp}(\mathcal{E}_{rec})|}}={{{k^*}\choose{\floor{k^*\epsilon_{rec}}}}}={{{k^*}\choose{\floor{2k^*\epsilon_{ss}}}}}\leq{2^{{k^*{h_2(2\epsilon_{ss})}}}},\label{eq:9}
\end{align}
where $h_2(x)=-x\log(x)-(1-x)\log(1-x)$ is the binary entropy function with input error rate of $x$. 

Although it remains feasible for one to have the recovery algorithm run in parallel with all possible value of $\epsilon_{rec}\in[(2k^*)^{-1},1/2]$ using different computational machine. The overall brute-force complexity is still bounded by the order of  $O(2^{{k^*{h_2(2\epsilon_{ss})}}})$ for $\epsilon_{ss}\in{[(2k^*)^{-1},1/4]}$.

\subsection{Choosing the Best Code for Difference Sources}\label{10.3}

Recall that $d'=\floor{k^*\xi}$, where $\xi=\norm{w_e\xor{w'}}(k^*)^{-1}$. It is obvious that the value of $t_{(+)}$ defined in our construction strictly depends upon the input noisy string's distribution $w_e\in{\mathcal{W}}$ (or the value of $\xi$). In such a case, the maximum value for $\xi$ reveals the maximum achievable distance $t_{(+)}$ for worst-case error tolerance. Such achievable tolerance distance does not necessary restricted by a particular choice of error correction code, rather, it is input dependence, i.e. depends upon the value of $d'$.

Nonetheless, it is straightforward for one to simply choose a code with specified tolerance rate $\xi=t/n$ for $\cdv_{out}$. In such a case, the value of $d'=\floor{k^*\xi}$ can be determined exactly and so the achievable tolerance distance for worst-case error $t_{(+)}=\floor{2k^*\xi}$ is set immediately depends upon the chosen code for $\cdv_{out}$. However, doing so is not always a good practice in reality. This is because if the error rate introduced during sketching phase is larger than the error tolerance rate of the chosen code, i.e., $\epsilon_{ss}>\xi$, the \textit{correctness} of the recovery algorithm might not hold. On the other hand, if $\epsilon_{ss}<\xi$, it may lead to \textit{over error tolerance} (i.e., tolerating more error than necessary), hence cannot show security to\textit{ more error than entropy} sources. In view of this, the best one can do is to look for the minimum solution of $\xi=\epsilon_{ss}$ (recall $\xi\geq{\epsilon_{ss}}$). Note that doing so is equivalent to look for the minimum distance $d'$ for the input pair $(w,w')$ where the error rate $\epsilon_{ss}$ introduced during the sketching phase has to be identified. Once such minimum solution is found, it follows that the knowledge on $t_{(+)}$ is immediate.

Since $\xi\geq{\epsilon_{ss}}$, it follows $t_{(-)}\geq{0}$. Therefore, the required \textit{minimum distance} for $w$ and $w'$ in our construction can be simply described with $\epsilon_{ss}$ as
\begin{align}
\norm{w\xor{w'}}=d'=\floor{k^*\xi}\geq{{\floor{k^*\epsilon_{ss}}}}\geq{0}.\label{eq:10}
\end{align}
By minimum $d'=\floor{k^*\xi}=\floor{k^*\epsilon_{ss}}=0$, it follows 
\begin{align}\label{eq:11}
{t_{(-)}}=0\leq\norm{w_e\xor{w'_{e'}}}\leq{t^*}
\end{align}
must hold in order for the second decoding with $\cdv_{in}$ to  success. Hence, one can always choose the parameters $t^*\geq{0}$ for $\cdv_{in}$ which is trivial for any $[n^*,k^*,t^*]$ linear code. 

Same thing applied to the outer code $\cdv_{out}$. Given such minimum solution is found (means $\xi=\epsilon_{ss}$). One shall have $t_{\max}=n(\xi-
\epsilon_{ss})=0$. Therefore, one can simply choose $\cdv_{out}$ to be any trivial linear code with $t=t_{\max}=0$. For the most easiest way, simply choose two random generator matrices for $\cdv_{in}$ and $\cdv_{out}$ with $t^*=0$ and $t=0$ respectively. Doing so means no error tolerance thus $c^*={c^*}'$ and $c=c_i'$ are assumed. In such a case, one can skip both of the decoding steps (relies on $\mathsf{f}$) and replace it with Gaussian elimination which can be done efficiently in $O(n^3)$.

Focusing on the worst-case error when $\epsilon_{rec}=2\epsilon_{ss}$ (follows Eq. \ref{eq:8}), the first affirmative decoding result (first $k-n^*$ bits of ${v^*}$ are zeros) would mean
\begin{align*}
\norm{w_{e}\xor{w'_{e'}}}={(d'}+\floor{k^*\epsilon_{ss}})-\floor{k^*\epsilon_{rec}}={0}.
\end{align*}
Thus, one shall have the worst-case solution for $\epsilon_{rec}$, i.e., $\epsilon_{rec}=2\epsilon_{ss}$, coincides with the minimum solution when $\xi=\epsilon_{ss}$ s.t.
\begin{align}\label{eq:12}
\epsilon_{rec}=\epsilon_{ss}=\xi.
\end{align}
For sufficiently large $n$, the recovery of $c^*$ would success with probability $1-2^{-(k-n^*)}$ and the maximum tolerance distance $t_{(+)}=d'+\floor{k^*\epsilon_{ss}}=\floor{2k^*\epsilon_{ss}}$ can also be determined. Since $\epsilon_{ss}\in{[(2k^*)^{-1},1/4]}$, it means the solution $\xi\leq{1/4}$ follows. Thus, one able to tolerate at most 
\begin{align}\label{eq:13}
t_{(+)}=d'+\floor{k^*\epsilon_{ss}}=\floor{k^*\xi}+\floor{k^*\epsilon_{ss}}\leq\floor{2k^*\epsilon_{ss}}=\floor{k^*/2}
\end{align}
number of errors in the worst-case using our construction with maximum error rate $\epsilon_{ss}=1/4$ introduced in sketching phase. 

\subsection{Computational Hardness of Recovery: The NP-Complete problems}\label{10.4}

Nonetheless, Eq. \ref{eq:9} suggesting exponential computation time in the input size $k^*$ for all $\epsilon_{rec}\in[(2k^*)^{-1},1/2]$, which is highly inefficient for large $k^*$. This result is not suprised because determining the minimum distance of an error correction code or the \textit{minimum distance problem} is indeed NP-complete \cite{vardy1997intractability}. Besides, looking for such nontrivial error vector $e'\in{\mathcal{E}_{rec}}$, which viewed as the minimum-weight solution to Eq. \ref{eq:8} is another NP-complete problem \cite{mceliece1978inherent}, commonly refer to the \textit{maximum likelihood decoding problem} for a linear code  \cite{guruswami2005maximum}. Clearly, if the later one can be solved easily, then the former problem can also be solved easily by trying $\floor{k^*\epsilon_{rec}}=1,2\ldots$ (parametrized by $\epsilon_{rec}$) with different random string $w'$ to yield a polynomial length of yes/no solutions. More formally, the corresponding decision version of maximum likelihood decoding problem can be formalized as follow:
\\

\noindent\textbf{Problem}: \textit{Maximum likelihood decoding\\}
\textbf{Instance}: \textit{A $\alpha\times{n}$ binary matrix $H$, a vector $y\in\FF_{2}^{\alpha}$, and integer $z>0$\\}
\textbf{Question}: \textit{Is there a vector $x\in{\FF_{2}^{n}}$ of weight $\leq{z}$, such that $Hx=y$}\\
\\Clearly, the answer for above question would be `yes' if one able to find $x$, which can be viewed as the solution for $e'\in{\mathcal{E}_{rec}}$ where Eq. \ref{eq:8} holds. More explicitly, let $H$ be the parity check matrix of $\cdv_{out}$, so $\alpha=n-k$. Note that $Hc\Leftrightarrow{\mathsf{syn}}(c)=0^\alpha$, where ${\mathsf{syn}}(c)$ is the syndrome of the encoded codeword $c\in{\cdv_{out}}$. Given the corrupted codeword $c'$, the RV $\phi'$ and $H$, the syndrome decoding algorithm $\mathsf{f}$ computes 
\begin{align*}
\mathsf{f}(c')=H\cdot{(c\xor{(\phi'\xor\phi)}})=Hc\xor{H\delta}=0^\alpha\xor{\mathsf{syn}(\delta)}={\mathsf{syn}(\delta)}
\end{align*}
yields the syndrome of $\delta$. Given  the distance $\norm{\phi\xor\phi'}=\norm{\delta}$ is small, i.e., $\norm{\delta}\leq{t}$, knowing $\mathsf{syn}(\delta)$ is enough to determine the offset $\delta$, which can be done via $\mathsf{f}$ in $\poly$ running time. 

In our case, by treating $x=\delta$ follows the constrain of $\norm{\delta}\leq{t_{\max}}$ (see the proof in Corollary \ref{corollary:1}), the hardness of looking for $\delta$ (of size $n$) is as hard as the above mentioned decision problem related to the maximum likelihood decoding. Nevertheless, recall results in Section \ref{10.1} showed that if $e'\in{\mathcal{E}_{rec}}$ can be found s.t. $\norm{w_e\xor{w'_{e'}}}\leq{t_{(-)}}\leq{t^*}$ holds, then $\norm{\delta}\leq{t_{\max}}$ can be achieved with probability at least $1-2^{-(k-n^*)}$ (conditioned on sufficiently large $n$). Viewed this way, efficient searching process for $e'$ immediately implies efficient maximum likelihood decoding algorithm, hence resolve the NP-complete problem discussed above in an efficient manner.  

More precisely, the searching process for the error vector $e'$ can be formalized by the algorithm $\langle\mathsf{SS}_{\mathsf{\Omega},\cdv_{in},\cdv_{out}},\mathsf{Rec}_{{\mathsf{\Omega}},\cdv_{in},\cdv_{out},\mathsf{f}}\rangle$ to start and end within $\floor{k^*\epsilon_{rec}}=0,1,2,\ldots,\floor{2k^*\epsilon_{ss}}$ corresponds to the worst-case error rate $\epsilon_{ss}$ introduced during sketching. 
Because the recovery process is defined adversarial, the actual value of $\epsilon_{ss}$ might not known. Nonetheless, we have $\epsilon_{ss}\in{[(2k^*)^{-1},1/4]}$, hence, the maximum number of trial would be bounded at most $\floor{2k^*\epsilon_{ss}}=\floor{k^*/2}$ for a given value of $k^*$.

%Above reasoning also explained why we should reduce our decoding problem into maximum likelihood decoding problem (NP-complete) with algorithm pair $\langle\mathsf{SS}_{\mathsf{\Omega},\cdv_{in},\cdv_{out}},\mathsf{Rec}_{{\mathsf{\Omega}},\cdv_{in},\cdv_{out},\mathsf{f}}\rangle$. Because only when $t_{\max}$ is a constant, our solutions can always be verified in polynomial time despite we knew the original problem is inherently hard which require exponential running time $O(n^{n/2})$ (NP-hard) for all $n>0$ (non-constant). 

\subsection{Deterministic Polynomial Time Recovery}\label{10.5}
In the last subsection, we have studied that the successful recovery of the original codeword $c^*$ using our proposed recovery algorithm is computational hard in the sense that the process in looking for the nontrivial error vector $e'$ is NP-complete. This subsection provided more details discussion on the efficiency of our proposed recovery algorithm. More precisely, we would show that certain minimum requirement (bound) is required to be satisfied in order to claim efficiency of our proposed algorithm, hence resolved the above mentioned NP-complete problems.

Note that the efficiency of our proposed recovery algorithm $\mathsf{Rec}_{\mathsf{\Omega},\cdv_{in},\cdv_{out},\mathsf{f}}
$ strongly depends upon the brute-force complexity itself, which is proportional to the value of ${|\mathsf{supp}(\mathcal{E}_{rec})|}$. Given one any chosen values of $n^*$ and $k$ for $\cdv_{in}$ and $\cdv_{out}$ respectively s.t. ${{{k^*{h_2(2\epsilon_{ss})}}}}\leq{k-n^*}$, such complexity can be bounded in terms of the sketch size $n$ described as:
\begin{align}\label{eq:14}
|{{\mathsf{supp}(\mathcal{E}_{rec})|}}\leq{2^{{k^*{h_2(2\epsilon_{ss})}}}}\leq{2^{k-n^*}}={2^{m'}}=n+1.
\end{align}
Then, by summing all the necessary operation steps (complexity in term of $n$), i.e., $\text{Step 1}+\text{Step 2}+....+\text{Step 16}=1+n+1+...+1=\poly$ over the recovery algorithm $\mathsf{Rec}_{{\mathsf{\Omega}},\cdv_{in},\cdv_{out},\mathsf{f}}$. The overall complexity of the recovery algorithm can be expressed as some polynomial function $\poly$ using $n$. 

%Since both decoding process can be replaced with Gaussian elimination which operates in $O({n^{3}})$ (see Subsection \ref{10.3}), the remaining steps on $\mathsf{Rec}_{\mathsf{\Omega},\cdv_{in},\cdv_{out},\mathsf{f}}
%$ are also operate in $O({n^3})$. Therefore the overall recovery complexity is in $\poly$. In particular, given the polynomial expressed to be $\poly=n+1$, which is linear in $n$ (follows Eq. \ref{eq:14}), one shall have $\mathsf{Rec}_{\mathsf{\Omega},\cdv_{in},\cdv_{out},\mathsf{f}}
%$ operates in $O(n^3)$.

\subsection{Efficiency in Average-case and Worst-case scenarios}\label{10.6}{Our efficient result discussed in previous section suggested padding more zeros in front of $v_{syn}$ means greater value of $n$, hence higher recovery complexity. It is natural for one to question on whether expressing the number of iterations in term of $n+1$ is meaningful. More formally, one may ask 

\textit{``Does such expression really help in reducing our effort in solving the given problem?"} 
 
In this subsection, we will discuss the controversy mentioned above and show that such expression is necessary for large class error rate but not all, i.e., $\epsilon_{rec}<1/2$.  

To start the discussion, it is good to recall the prerequisite bound of our efficiency claim showed in Eq. \ref{eq:14}. Noting that the number of zeros padding $k-n^*=m'$ shall increase proportional to the value of $h_2(2\epsilon_{ss})$ for it to hold. This can be argued in the sense that given any recovery error parameter $\epsilon_{rec}\in{[(2k^*)^{-1},2\epsilon_{ss}]}$, one have to try $2^{k^*h_2(\epsilon_{rec})}$ iterations to look for the solution of error vector $e'\in{\mathcal{E}_{rec}}$ of weight $\floor{k^*\epsilon_{rec}}$. Since $\cdv_{in}(w)=c^*$, the total number of possible value for $c^*$ must be bounded by the possible values of $w\in{W}$, which is at most $2^{k^*}$. In other words, if there are total number of $2^{m'}$ possible values for $c^*$, at least $2^{m'}\leq{2^{k^*}}$ number of random guesses are needed to ensure one can always reveal $c^*$ exactly. 

Under the \textit{average-case scenario} when $\epsilon_{ss}\in{[(2k^*)^{-1},1/4)}$ (or equivalently $\epsilon_{rec}\in{[(2k^*)^{-1},1/2)}$), one shall have the recovery algorithm to try maximally $2^{k^*h_2(2\epsilon_{ss})}$ iterations to search for $e'$ in recovering $c^*$ from the worst-case error. Since the number of possible value for $c^*$ is denoted as $2^{m'}$, the searching for $e'$ shall stop at most $2^{k^*h_2(2\epsilon_{ss})}<{2^{k^*}}={2^{m'}}-1$ iterations to yield a meaningful efficiency claim in the sense that it can always do better than random guesses of at least $2^{m'}=2^{k-n^*}$ trials manifested by the number of zero paddings in front of $v_{syn}$. 

%Recall $2^{m'}$ is expressed in term of polynomial $n+1$ and the recovery algorithm should stop after $\poly$ steps. It means meaningful efficiency which holds for the worst case error (i.e., $\epsilon_{rec}=2\epsilon_{ss}$) can be claimed which follows $2^{k^*h_2(\epsilon_{rec})}=2^{k^*h_2(2\epsilon_{ss})}={n}\leq{\poly-1}$ where the number of iterations must be bounded by the input sketch size $n$ (as lower bound) and the polynomial function $\poly$ (as upper bound) determined by summing all the necessary operation steps over $\mathsf{Rec}_{{\mathsf{\Omega}},\cdv_{in},\cdv_{out},\mathsf{f}}$. 

In fact, it is practically more efficient to express $2^{m'}=\poly+1\leq{2^{k^*}}$ rather than $n+1$. This is because despite Eq. \ref{eq:14} holds with linear size of complexity $O(n)$ (by expressing $2^{m'}=n+1$), large value of $n$ would lead to longer operation time (practically) for the remaining step of $\mathsf{Rec}_{\mathsf{\Omega},\cdv_{in},\cdv_{out},\mathsf{f}}
$, which are bounded in $O(n^3)$. Even if expressing $2^{m'}=\poly+1\leq{2^{k^*}}$ is suffice to show our recovery algorithm should operate in $\poly$ polynomial time in the sketch size.  Keeping the value $n$ small indeed offers better efficiency guaranty both in term of complexity and practice, since small $n$ implies \textit{both} $O(n^3)$ and $\poly$ are small. Therefore, meaningful efficiency claim implies at most $$2^{k^*h_2(2\epsilon_{ss})}=2^{m'}-1=2^{-(k-n^*)}-1=\poly\leq{2^{k^*}}-1$$ iteration in searching $e'$ over larger class of worst-case error rate which can be precisely described as 
\begin{align*}
&2\epsilon_{ss}\leq{h_2^{-1}\bigg(\frac{\log(2^{m'}-1)}{k^*}\bigg)}=h_2^{-1}\bigg(\frac{\log(\poly)}{k^*}\bigg)\\
&=h_2^{-1}\bigg(\frac{\log(2^{(k-n^*)}-1)}{k^*}\bigg)<h_2^{-1}\bigg(\frac{k-n^*}{k^*}\bigg)\leq{1/2}.
\end{align*}  
Then, follows Eq. \ref{eq:7}, the recovery error can be described exactly as $$\beta=2^{-(k-n^*)}=2^{-m'}=1/(\poly+1)\leq{1/\poly}.$$ 
Above results demonstrated that the recovery error is at most $1/\poly$, hence tighter upper bound can be used to describe $\beta=1/\poly$. This means that the recovery error $\beta$ is also $\poly$ reducible. It follows sufficiently large value of $n$ accompanies with correctness in reducing the error itself described as $\beta=1/\poly$. In such a case, it is succinct to describe $k-n^*$ using some polynomial function $\poly\leq{2^{k^*}}-1$ to show efficient error tolerance for large class of worst-case error, arbitrary close to fraction of $1/2$.

On the other hand, under the \textit{worst-case scenario} when $\epsilon_{ss}=1/4$ (or equivalent $\epsilon_{rec}=1/2$), one would need $2^{k^*h_2(1/2)}=2^{k^*}$ number of iterations in searcing for $e'$. Therefore, no any advantages can be gained by using our recovery algorithm to recover the exact value of $c^*$ compared to random guessing of $2^{k^*}$ trials for all possible values of $c^*$. Hence, meaningful efficiency claim cannot achieve in this case. Above scenario can be well described using the notion of \textit{perfect secrecy}, first studied by C. Shannon (1949) \cite{shannon1949communication}.}

Based on the above discussion, one can conclude that expressing the number of iterations in term of $k-n^*=\log(\poly+1)\leq{k^*}$ is always necessary for average-case scenario but not for worst-case scenario to achieve meaningful recovery efficiency claim with $\poly$ complexity. This result lead to the refined Eq. \ref{eq:14} and yield the requirement (bound) of meaningful efficiency claim over large class of worst-case error for $\mathsf{Rec}_{\mathsf{\Omega},\cdv_{in},\cdv_{out},\mathsf{f}}
$ described as \begin{align}\label{eq:15}
k^*h_2(\epsilon_{rec})\leq{{k-n^*}},
\end{align}
where $$k-n^*=\log(\poly+1)\leq{k^*},$$ holds for average-case scenario.
\section{Propositions Formalisation}\label{11.0} 
This section provide summary of our derived results of correctness and efficiency for $\langle\mathsf{SS}_{\mathsf{\Omega},\cdv_{in},\cdv_{out}},\mathsf{Rec}_{{\mathsf{\Omega}},\cdv_{in},\cdv_{out},\mathsf{f}}\rangle$. In particular, we formalized them into two major Propositions in the following paragraphs.

By Eq. \ref{eq:7} - \ref{eq:15}, we can formalise the following Proposition to characterize the correctness of $\mathsf{Rec}_{\mathsf{\Omega},\cdv_{in},\cdv_{out},\mathsf{f}}
$ for $t_{(+)}$ number of error in efficient manner with BCH codes.
\begin{proposition} \label{proposition:2}
Let $k^*\geq{k-n^*}\geq{3}$ where $k-n^*=\log(\poly+1)$. Given any error rate $\epsilon_{ss}\in{[(2{k^*})^{-1},1/4)}$, $\epsilon_{rec}\in{[(2k^*)^{-1},2\epsilon_{ss}]}$, and $\norm{w\xor{w'}}\geq{0}$ s.t. 
\begin{align*}
 {{{k^*h_2{(\epsilon_{rec})}}}}\leq{k-n^*}
\end{align*}
Then, for sufficiently large $n$, there exits BCH codes for $\cdv_{in}\in{\bin^{k^*}}$ and $\cdv_{out}\in{\bin^{n}}$ where $$\prob{\mathsf{Rec}_{\mathsf{\Omega},\cdv_{in},\cdv_{out},\mathsf{f}}(\mathsf{SS}_{\mathsf{\Omega},\cdv_{in},\cdv_{out}}(w,N,\epsilon_{ss}), w',N,\epsilon_{rec})=w}\geq{0.875}$$ can be achieved efficiently operating in time $\poly$ and the maximum achievable error tolerance rate is $2\epsilon_{ss}<{1/2}$.
\end{proposition}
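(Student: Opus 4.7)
The plan is to assemble the three separate ingredients already developed in Sections~\ref{10.1}--\ref{10.5}: (i) the computational reduction of the recovery error to the number of zero paddings $k-n^*$, (ii) the worst-case error tolerance analysis under $\epsilon_{rec}=2\epsilon_{ss}$, and (iii) the polynomial bound on the number of brute-force iterations via Eq.~\ref{eq:14} and Eq.~\ref{eq:15}. Each of these has already been individually argued; the proposition just consolidates them into one statement under a coherent BCH parameter choice. I would therefore present the proof as a sequence of three verifications culminating in a concrete parameter instantiation.

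First, for the correctness bound of $0.875$, I would invoke the reduction in Section~\ref{10.1}. Taking $n$ large enough so that $\exp(-2n\epsilon_{ss}^2) \leq 2^{-(k-n^*)}$ holds (e.g., $n \geq 2(k^*)^2$ as shown there), Eq.~\ref{eq:7} yields $\beta \leq 2^{-(k-n^*)}$. Since the hypothesis imposes $k-n^* \geq 3$, we immediately get $\beta \leq 2^{-3} = 0.125$, hence recovery success probability at least $0.875$. For BCH parameters this corresponds exactly to the setting $m' = k-n^* \geq 3$ discussed at the end of Section~\ref{10.1}, with $\cdv_{out}$ instantiated as the $[n,k,t]$ BCH code of blocklength $n=2^{m'}-1$ (and $\cdv_{in}$ any $[n^*,k^*,t^*]$ BCH code, possibly the trivial one with $t^*=0$ as discussed in Section~\ref{10.3}).

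Second, for efficiency, I would combine the hypothesis $k^* h_2(\epsilon_{rec}) \leq k-n^*$ with Eq.~\ref{eq:9}: the brute-force iteration count is $|\mathsf{supp}(\mathcal{E}_{rec})| \leq 2^{k^* h_2(\epsilon_{rec})} \leq 2^{k-n^*} = \poly+1$, where the last equality uses $k-n^* = \log(\poly+1)$. Each iteration performs constant-many linear-algebraic and syndrome-decoding operations, each bounded by $O(n^3)$, so summing over all steps of $\mathsf{Rec}_{\mathsf{\Omega},\cdv_{in},\cdv_{out},\mathsf{f}}$ as in Section~\ref{10.5} gives total runtime $\poly$ in the sketch size $n$. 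Third, for the tolerance rate, I would appeal to Section~\ref{10.2} and Eq.~\ref{eq:13}: with $\epsilon_{rec} = 2\epsilon_{ss}$ the worst-case error $t_{(+)} = d' + \floor{k^* \epsilon_{ss}}$ reduces via Eq.~\ref{eq:8} to $t_{(-)} \leq t^*$, so the effective tolerated error rate is $2\epsilon_{ss}$, which is strictly below $1/2$ since $\epsilon_{ss} \in [(2k^*)^{-1}, 1/4)$.

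The main obstacle is verifying mutual compatibility of the parameter constraints. Specifically, I must ensure that the BCH existence conditions ($n = 2^{m'}-1$, $n-k \leq m' t$, $d \geq 2t-1$) admit simultaneous choices of $(n,k,t)$ and $(n^*,k^*,t^*)$ making $k-n^* = m' \geq 3$, $k^* \leq n^* < k \leq n$, and $k^* h_2(\epsilon_{rec}) \leq k - n^*$ all hold, while still keeping $n$ large enough for the Hoeffding-style reduction of Section~\ref{10.1}. I would handle this by fixing $m' = \lceil \log(\poly(k^*)+1)\rceil$ first, then choosing $n = 2^{m'}-1$ sufficiently large to absorb both the $n \geq 2(k^*)^2$ requirement and the inequality $k^* h_2(\epsilon_{rec}) \leq m'$ (which is feasible because $h_2(\epsilon_{rec}) < 1$ whenever $\epsilon_{rec} < 1/2$, and the hypothesis excludes $\epsilon_{ss} = 1/4$). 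Once such a $(n, k, n^*, k^*)$ tuple is exhibited, the three bounds combine directly to deliver the claimed recovery guarantee.
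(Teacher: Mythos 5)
Your proposal follows essentially the same route as the paper: the paper offers no separate proof of Proposition~\ref{proposition:2} beyond invoking Eq.~\ref{eq:7}--\ref{eq:15}, i.e.\ exactly the consolidation you perform --- the Section~\ref{10.1} reduction giving $\beta\leq 2^{-(k-n^*)}\leq 1/8$ for $k-n^*=m'\geq 3$ with the BCH designation $n=2^{m'}-1$, the Section~\ref{10.2} worst-case analysis with $\epsilon_{rec}=2\epsilon_{ss}<1/2$, and the iteration bound $|\mathsf{supp}(\mathcal{E}_{rec})|\leq 2^{k^*h_2(\epsilon_{rec})}\leq 2^{k-n^*}=\poly+1$ from Eq.~\ref{eq:14}--\ref{eq:15} for the $\poly$ runtime claim. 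Your added verification of mutual compatibility of the BCH parameter constraints is a more careful touch than the paper provides, but it does not change the underlying argument.
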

It is natural to ask whether the bound for $k-n^*\geq{3}$ can be generalized to $k-n^*\geq{1}$, thus one obtains a tighter bound for $m'$ derived in Eq. \ref{eq:7}. Noting that a tighter bound for $m'\geq{k-n^*}\geq{1}$ implies the existence of larger class of $[n,k,t]$ linear codes which can be determined using our construction (by looking the most suitable value of $\xi$) for any sources $w$ (given $w'$) of minimum distance at least $\norm{w\xor{w'}}={\floor{k^*\xi}}\geq{\floor{k^*\epsilon_{ss}}}\geq{0}$. 

Nonetheless, we will show that the answer for this question is positive which allow one to simply choose two random generator matrices for $\cdv_{in}$ and $\cdv_{out}$ given some error parameter $\epsilon_{ss}\in{[(2k^*)^{-1},1/4)}$. In such a case, the selection of $\cdv_{in}$ and $\cdv_{out}$ no longer necessary to be BCH codes. Simply making sure the value of $n$ is sufficiently large to show correctness with proper number of zeros padding to form $v_{syn}$ and $v^*$, it follows that if $\exp(-2n\epsilon^2_{ss})\leq{1/2}$, then Eq. \ref{eq:7} should hold for $k-n^*\geq{1}$. Thus, $\prob{\mathsf{Rec}_{\mathsf{\Omega},\cdv_{in},\cdv_{out},\mathsf{f}}(\mathsf{SS}_{\mathsf{\Omega},\cdv_{in},\cdv_{out}}(w,N,\epsilon_{ss}), w',N,\epsilon_{rec})=w}\geq{0.5}$. One could also run the recovery with different value of $\epsilon_{rec}\in[(2k^*)^{-1},1/2)$ or try $\floor{k^*\epsilon_{rec}}=0,1,2,\ldots,\floor{k^*/2}$ until the first affirmative answer is obtained to determine the most suitable tolerable error $t_{(+)}$ as discussed in the last section. The recovery procedure can always done better than random guessing and its complexity can be bounded by some polynomial function $\poly$ follows the efficiency argument discussed in Section \ref{10.6}. Based on above reasoning, we shall have the following proposition to formalize $\langle\mathsf{SS}_{\mathsf{\Omega},\cdv_{in},\cdv_{out}},\mathsf{Rec}_{{\mathsf{\Omega}},\cdv_{in},\cdv_{out},\mathsf{f}}\rangle$ as an efficient sketching and recovery algorithm for all inputs of minimum distance $\norm{w\xor{w'}}\geq{0}$.

\begin{proposition} \label{proposition:3}
Let $k^*\geq{k-n^*}\geq{1}$ where $k-n^*=\log(\poly+1)$. Given $\epsilon_{ss}\in{[(2{k^*})^{-1},1/4)}$, $\epsilon_{rec}\in[(2k^*)^{-1},2\epsilon_{ss}]$, and $\norm{w\xor{w'}}\geq{0}$ s.t. 
\begin{align*}
&{{k^*h_2{(\epsilon_{rec})}}}\leq{{k-n^*}}
\end{align*}
Then, for sufficiently large $n$, there exits $[n,k,t]_2$ linear codes for $\cdv_{in}\in{\bin^{k^*}}$ and $\cdv_{out}\in{\bin^{n}}$ where $$\prob{\mathsf{Rec}_{\mathsf{\Omega},\cdv_{in},\cdv_{out},\mathsf{f}}(\mathsf{SS}_{\mathsf{\Omega},\cdv_{in},\cdv_{out}}(w,N,\epsilon_{ss}), w',N,\epsilon_{rec})=w}\geq{0.5}$$ can be achieved efficiently operating in time $\poly$ and the maximum achievable error tolerance rate is $2\epsilon_{ss}<{1/2}$.
\end{proposition}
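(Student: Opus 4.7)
The plan is to follow the argument sketched in the paragraph immediately preceding the statement, replacing the structural use of BCH codes with arbitrary random linear codes and loosening the zero-padding requirement from $k-n^*\geq 3$ to $k-n^*\geq 1$. The key observation driving the argument is that the correctness reduction of Section~\ref{10.1} only requires $\exp(-2n\epsilon_{ss}^2)\leq 2^{-(k-n^*)}$ (Eq.~\ref{eq:7}); for $k-n^*=1$ this becomes $\exp(-2n\epsilon_{ss}^2)\leq 1/2$, which holds whenever $n\geq (\ln 2)/(2\epsilon_{ss}^2)$. Since $\epsilon_{ss}\geq(2k^*)^{-1}$, any $n\geq 2(k^*)^2\ln 2$ suffices; this is exactly the meaning of ``for sufficiently large $n$'' in the statement.

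First I would invoke the parameter choice described in Section~\ref{10.3}: target the minimum-distance solution $\xi=\epsilon_{ss}$, so that $t^*=0$ for $\cdv_{in}$ and $t=0$ for $\cdv_{out}$ are admissible. Under this choice the two ``decoding'' steps of $\mathsf{Rec}_{\mathsf{\Omega},\cdv_{in},\cdv_{out},\mathsf{f}}$ collapse to exact inversion and can be realized by Gaussian elimination on random generator matrices in $O(n^3)$. This removes the BCH-specific relation $n-k\leq m't$ used for Proposition~\ref{proposition:2} and permits $m'=k-n^*$ to be as small as $1$, which is the main structural difference between the two propositions.

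For correctness, I would combine Proposition~\ref{proposition:1} with the worst-case search analysis of Section~\ref{10.2}: by brute-forcing $e'\in\mathcal{E}_{rec}$ of weight $\floor{k^*\epsilon_{rec}}$, the iteration that drives the residual error down to $t_{(-)}$ (via Eq.~\ref{eq:8}) triggers an affirmative zero-check on the first $k-n^*$ bits of $v^*$. Following the computational reduction of Section~\ref{10.1}, the first-decoding output in any wrong iteration is treated as essentially uniform over $\cdv_{out}$, so a false affirmative occurs with probability $2^{-(k-n^*)}$; hence $\beta=2^{-(k-n^*)}\leq 1/2$, giving the claimed success probability $\geq 0.5$. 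For efficiency, the hypothesis $k^*h_2(\epsilon_{rec})\leq k-n^*$ combined with the Stirling bound (Eq.~\ref{eq:9}) yields $|\mathsf{supp}(\mathcal{E}_{rec})|\leq 2^{k^*h_2(\epsilon_{rec})}\leq 2^{k-n^*}=\poly+1$, so the outer brute-force loop runs in polynomially many iterations; combined with the $O(n^3)$ per-iteration cost of Gaussian elimination and of regenerating the RV via $\mathsf{\Omega}$, the overall runtime is $\poly$. The maximum tolerance rate $2\epsilon_{ss}<1/2$ is inherited verbatim from Section~\ref{10.2} since $\epsilon_{ss}<1/4$.

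The main obstacle I expect is making the ``random-looking first-decoding output'' step honest for \emph{random} linear codes (rather than BCH codes): one has to argue that in every wrong brute-force iteration the projection of $\cdv_{out}^{-1}(ss\xor\phi'_i)$ onto its first $k-n^*$ coordinates is (close to) uniform over $\bin^{k-n^*}$, so that the false-affirmative bound $2^{-(k-n^*)}$ really applies. This replaces the deterministic zero-padding structure that made the BCH version of Proposition~\ref{proposition:2} transparent, and the cleanest route is a uniform-subspace argument over the randomness of the two chosen generator matrices, showing that the affine image $c\xor(\phi\xor\phi'_i)$ lands uniformly in a $k$-dimensional subspace whose coordinate projection onto the first $k-n^*$ bits is surjective, hence uniform after inversion. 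Once this step is in place, the remaining pieces (large-$n$ tail bound, $\poly$ iteration count, $O(n^3)$ linear algebra) compose straightforwardly to yield the stated conclusion.
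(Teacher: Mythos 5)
Your proposal follows essentially the same route as the paper's own argument (the paragraph preceding the proposition): it relies on the computational reduction of Section~\ref{10.1}, i.e., Eq.~\ref{eq:7} with $k-n^*=1$ so that $\exp(-2n\epsilon_{ss}^2)\leq 1/2$ for sufficiently large $n$ gives $\beta\leq 1/2$, on the Section~\ref{10.3} choice of random generator matrices with $t^*=t=0$ and Gaussian elimination, and on the iteration bound $2^{k^*h_2(\epsilon_{rec})}\leq 2^{k-n^*}=\poly+1$ from Eq.~\ref{eq:9} and Section~\ref{10.6}. The only substantive difference is that you explicitly flag as unproven the uniformity of the first-decoding output in wrong iterations (the $2^{-(k-n^*)}$ false-affirmative bound), a step the paper merely asserts.
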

The results in Proposition \ref{proposition:2} and \ref{proposition:3} have also suggested the NP-complete problems mentioned in \cite{vardy1997intractability} and \cite{mceliece1978inherent} can be solved efficiently in polynomial time by using algorithm pair $\langle\mathsf{SS}_{\mathsf{\Omega},\cdv_{in},\cdv_{out}},\mathsf{Rec}_{{\mathsf{\Omega}},\cdv_{in},\cdv_{out},\mathsf{f}}\rangle$. Follows the work in \cite{karp1972reducibility} stated, if any NP-complete problem possesses a polynomial time algorithm to solve, then so does every NP-problem, and hence we shall have P=NP.

\section{Security}\label{12.0}
Recall the minimum information required to show correctness using $\mathsf{Rec}_{\mathsf{\Omega},\cdv_{in},\cdv_{out},\mathsf{f}}
$ could be expressed as the number of zeros padding to $v_{syn}$, which is $k-n^*$ as long as $n$ is sufficiently large. Any adversary should be able to gain certain minimum information due to the zeros padding. This minimum information leakage is independent of how liberal or conservative is the selection of the input string $w$. Rather, it is a limit (lower bound), which required to be set in before constructing any error tolerance system, to satisfy certain minimum requirement of correctness. Formally, follows Eq. \ref{eq:4} and Eq. \ref{eq:7}, such minimum information leakage can be used to express the conditioned min-entropy of RV:
\begin{align*}
&\condminentropy{\Psi}{\mathcal{W}}\geq{\floor{\log(1/\exp{(-2n{\epsilon_{ss}^2)}})}}\geq{m'}={k-n^*}\geq{3}.
\end{align*}
Noting that the result in Proposition \ref{proposition:3} would give tighter result, leads to
\begin{align}
&\condminentropy{\Psi}{\mathcal{W}}\geq{\floor{\log(1/\exp{(-2n{\epsilon_{ss}^2)}})}}\geq{m'}={k-n^*}\geq{1}.\label{eq:16}
\end{align}
Above results showed a tight lower bound min-entropy requirement for any sources of at least bit $(m\geq{1})$. Note that such entropy measurement is computational in the sense that it merely depends upon how the user parametrize the algorithm pair $\langle\mathsf{SS}_{\mathsf{\Omega},\cdv_{in},\cdv_{out}},\mathsf{Rec}_{{\mathsf{\Omega}},\cdv_{in},\cdv_{out},\mathsf{f}}\rangle$ for $\cdv_{in}$ and $\cdv_{out}$ with $k^*\leq{n^*}<{k\leq{n}}$.

\subsection{Computational Secure Sketch Implies Information-Theoretic Secure Sketch}\label{12.1}

This section gives a more general discussion on the implication of a standard secure sketch is in fact computational derivable.

Fuller \textit{et al}., \cite{fuller2013computational} have showed negative result on computational relaxation of information-theoretical secure sketch requirement. In particular, for hamming metric space,  computational secure sketch assumed to retain high Hill entropy (pseudo-random entropy) implies error correction code with approximately $2^{m'}$ points that correct $t$ random error. Their proof relies on the existence of Shannon code, which can correct arbitrary random errors $t'\leq{t}$. The same result has first implicitly stated by Dodis \textit{et al.}, (see Section 8.2 in \cite{dodis2004fuzzy}) where the main idea is by the conversion of worst-case error into random error by using random permutation function $\pi$ which being part of the public sketch. Nevertheless, the proof itself is non-constructive because it was still an open question on how to construct an efficient Shannon code. 

It is not difficult to verify that our findings agreed on above argument with explicit construction of the algorithm pair $\langle\mathsf{SS}_{\mathsf{\Omega},\cdv_{in},\cdv_{out}},\mathsf{Rec}_{\mathsf{\Omega},\cdv_{in},\cdv_{out},\mathsf{f}}\rangle$ viewed as an efficient Shannon code. More precisely, the sketching algorithm $\mathsf{SS}_{\mathsf{\Omega},\cdv_{in},\cdv_{out}}$ serves to encode any noisy input $w$ and the recovery algorithm $\mathsf{Rec}_{\mathsf{\Omega},\cdv_{in},\cdv_{out},\mathsf{f}}$ can be used for decoding. 

We now show that our construction meets the well-studied computational bound (Gilbert-Varshamov (GV) Bound) and coincides with the info-theoretic bound (Shannon bound) as an optimal error correction code.

To do so, we shall use the derived computational bound for $\mathsf{Rec}_{\mathsf{\Omega},\cdv_{in},\cdv_{out},\mathsf{f}}
$ to be efficient and correct with probability at least $1-1/\poly$ is given as (see Proposition \ref{proposition:3} and Eq. \ref{eq:15}):
\begin{align}\label{eq:17}
{{{k^*h_2{(\epsilon_{rec})}}}}\leq{k-n^*} \  (computational).
\end{align}
%On the other hand, information theoretic bound for computational unbounded attacker in modelling the input distribution is showed to be (see Eq.\ref{eq:18}) 
%\begin{align}\label{eq:17}
%\condminentropy{\mathcal{W}}{\mathcal{I}}\geq\condminentropy{W,\mathcal{E}_{ss}}{\mathcal{I}=I}\geq{{{{k^*h_2{(\epsilon_{ss})}}}}} \ (info-theoretic).
%\end{align}

%
%Proposition \ref{proposition:4} holds for $\epsilon_{ss}\leq\epsilon_{rec}\leq{2\epsilon_{ss}}$, implies the existence of efficient linear error correction with number of points described as $2^{(k-n^*)}\leq{2^{m'}}$ to correct error rate up to $2\xi\leq{1/2}$ with probability $1-2^{-(k-n^*)}$. Therefore,
%Eq. \ref{eq:19} can be used to generalise the derived residual entropy $\condminentropy{\mathcal{W}}{\mathcal{I}}$ for whole range value of error parameter $\epsilon_{ss}\leq\epsilon_{rec}\leq{2\epsilon_{ss}}$ computable using our construction. Doing so lead to
%\begin{align}\label{eq:21}
%\condminentropy{\mathcal{W}}{\mathcal{I}}\geq\condminentropy{W,\mathcal{E}_{rec}}{\mathcal{I}=I}\geq{H(\mathcal{E}_{rec})}.
%\end{align}

Let $(k-n^*)/k^*=\log(\poly+1)/k^*=1-R$ where $R$ is of rate $0\leq{R}\leq{1-1/k^*}$ (since $\log(\poly+1)\leq{k^*}$ and $k-n^*\geq{1}$). Apply such result to Eq. \ref{eq:17}, we obtain: 
\begin{align}\label{eq:18}
R\leq{1-h_2(\epsilon_{rec})}\leq{1-h_2(\epsilon_{ss})}\leq{1-h_2(1/2k^*)}<{1} \ \textit{(Shannon Bound)},
\end{align}
which is indeed the Shannon bound with any source of size $k^* \geq{2}$ (where $k\geq{3}$) for it to be meaningful. Same thing applied for the upper bound which attained the GV bound for $R$ in average-case scenario (i.e., $\epsilon_{ss}\in{[(2k^*)^{-1},1/4)}$) described as: 
\begin{align}\label{eq:19}
R\geq{1-h_2({2\epsilon_{ss}})}>{0} \ (GV-Bound).
\end{align}
Note that when the number of zero padding $k-n^*>k^*$, the rate $R$ becomes negative. Therefore, restricting the number of zero padding $k-n^*\leq{k^*}$ is necessary to ensures the achievable rate $R$ is positive. Note that the value of $R$ is inversely proportional to the order of $\poly$ and the sketch size $n$. Given the sketch size $n$ is decreased, it follows $\poly$ is decreased thus the rate $R$ should be increased. 

Observe that both Eq. \ref{eq:18} (Shannon bound) and Eq. \ref{eq:19} (GV bound) can achieve rate $R<1$ arbitrary close to one given $k^*$ is sufficiently large. Applied this result into our construction, this implies for any input of size $k^*\geq{2}$, for sufficiently large $n$,  there exist a good encoding and decoding function which can be characterized by algorithm pair $\langle\mathsf{SS}_{\mathsf{\Omega},\cdv_{in},\cdv_{out}},\mathsf{Rec}_{\mathsf{\Omega},\cdv_{in},\cdv_{out},\mathsf{f}}\rangle$ where $k^*\leq{n^*}<{k}\leq{n}$ follows. Nonetheless, these two bound are subjected to the same minimum rate $R>0$, where same upper bound of source entropy loss can be described using $\log(\poly)\leq{\log(2^{k^*}}-1)$ for the average-case scenario and $\log(\poly+1)=\log(2^{k^*})=k^*$ for the worst-case scenarios.

Based on above reasoning, computational secure sketch implies information theoretic secure sketch, where both of them are subjected to the same upper bound and lower bound (where $R\in{(0,1)}$) with entropy loss at most $k^*$. 

Follows the correctness formalized in Proposition \ref{proposition:3}, and the security reasoning above, the Proposition below is given to characterize our construction as a efficient secure sketch for average-case scenario.
\begin{proposition} \label{proposition:4} For $m'\geq{1}$. Algorithm pair $\langle\mathsf{SS}_{\mathsf{\Omega},\cdv_{in},\cdv_{out}},\mathsf{Rec}_{{\mathsf{\Omega}},\cdv_{in},\cdv_{out},\mathsf{f}}\rangle$ is an efficient $(\mdv_1,m,m',t_{(+)})$-secure sketch with parameter $k^*\leq{n^*}<{k}\leq{n}$ (for sufficiently large $n$). It can correct fraction of worst-case errors arbitrary close to $1/2$ (i.e., $t_{(+)}<\floor{k^*/2}$).
\end{proposition}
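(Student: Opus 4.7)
The plan is to assemble Proposition \ref{proposition:4} by combining the correctness/efficiency statement of Proposition \ref{proposition:3} with the security argument sketched in Section \ref{12.0} and the tolerance bound derived in Section \ref{10.3}. Concretely, I would decompose the claim into the three ingredients demanded by the secure-sketch definition in Section \ref{3.0}: (i) the output length and polynomial-time running requirement for $\mathsf{SS}_{\mathsf{\Omega},\cdv_{in},\cdv_{out}}$ and $\mathsf{Rec}_{\mathsf{\Omega},\cdv_{in},\cdv_{out},\mathsf{f}}$; (ii) the correctness statement $\prob{\mathsf{Rec}(\mathsf{SS}(w,N,\epsilon_{ss}),w',N,\epsilon_{rec})=w}\geq 1-\beta$ whenever $\mathsf{dis}(w,w')\leq t_{(+)}$; and (iii) the residual conditional min-entropy bound $\condminentropy{W}{SS}\geq m'$.

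For (i), I would note that all steps of $\mathsf{SS}_{\mathsf{\Omega},\cdv_{in},\cdv_{out}}$ run in $O(n^3)$ (as observed after the sketching procedure in Section \ref{6.0}), and that Section \ref{10.5} together with the refined analysis of Section \ref{10.6} shows $\mathsf{Rec}_{\mathsf{\Omega},\cdv_{in},\cdv_{out},\mathsf{f}}$ runs in $\poly$ in the sketch size $n$ provided $k^*h_2(\epsilon_{rec})\leq k-n^*=\log(\poly+1)$. For (ii), I would directly invoke Proposition \ref{proposition:3}: choosing $\epsilon_{ss}\in[(2k^*)^{-1},1/4)$ and running recovery with $\epsilon_{rec}\in[(2k^*)^{-1},2\epsilon_{ss}]$ gives success probability at least $0.5$ (which matches the $1-\beta$ requirement for negligible $\beta$ when $n$ is large enough), and Eq.~\ref{eq:13} identifies the worst-case tolerable distance as $t_{(+)}\leq\lfloor k^*/2\rfloor$, with error rate $2\epsilon_{ss}$ that can be chosen arbitrarily close to $1/2$.

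For (iii), I would use the correlation analysis of Section \ref{5.1} and Corollary \ref{corollary:1} to pass from the input distribution $W$ to the distribution $\Psi$ of the RV, and then to the sketch $SS=C\xor\Phi$. The key link is Eq.~\ref{eq:16}, which gives $\condminentropy{\Psi}{\mathcal{W}}\geq m'=k-n^*\geq 1$. Because the sketch $ss$ equals $c\xor\phi$ where $c$ is determined by $w$ and the (random, independent) padding stage, the conditional min-entropy of $W$ given $ss$ inherits this bound: any adversary observing $ss$ learns at most $k^*-m'$ bits about $w$, so the residual entropy $\condminentropy{W}{SS}\geq m'$ is satisfied, which is exactly the security requirement of a $(\mdv_1,m,m',t_{(+)})$-secure sketch for any source $W$ on $\mdv_1$ with $\minentropy{W}=m\geq m'\geq 1$.

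The main obstacle I foresee is step (iii): rigorously propagating the conditional min-entropy bound on $\Psi$ (a statement about the resilient vector given the noisy string family $\mathcal{W}$) into a bound on $\condminentropy{W}{SS}$ that holds uniformly over the source distribution of $W$, rather than just the induced distribution on $w_e$. This requires arguing that the random error parsing $w\mapsto w\xor e$ in Section \ref{8.0} does not collapse the min-entropy of $W$, that the padding zeros in $v^*$ do not leak additional information beyond the $k-n^*$ bits already accounted for, and that the public randomness $N$ can be treated as independent of $W$. Once those independence steps are made precise, the min-entropy bookkeeping reduces to what is already done in Eq.~\ref{eq:4} and Eq.~\ref{eq:16}, and the Proposition follows by combining (i)--(iii).
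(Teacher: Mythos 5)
Your decomposition---correctness and efficiency from Proposition~\ref{proposition:3}, the tolerance bound from Eq.~\ref{eq:13}, and security from Eq.~\ref{eq:16}---is the same route the paper takes: its entire justification of Proposition~\ref{proposition:4} is the sentence that it ``follows the correctness formalized in Proposition~\ref{proposition:3}, and the security reasoning above.'' The problem is that the obstacle you flag in step (iii) is not a bookkeeping detail to be tidied up; it is a genuine gap, in your proposal and in the paper alike. Eq.~\ref{eq:16} bounds $\condminentropy{\Psi}{\mathcal{W}}$, i.e.\ a Hoeffding-type tail bound on the probability of finding a resilient vector within distance $t_{\max}$ of $\phi'$ given that the underlying noisy strings are far apart. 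The secure-sketch definition in Section~\ref{3.0} requires a bound on $\condminentropy{W}{SS}$: the unpredictability of the input $w$ to an adversary who holds the sketch $ss$, the public string $N$, and (per Section~\ref{6.0}) the error vector $e$ that is ``leaving in clear.'' These are different quantities, and no reduction from one to the other is given anywhere. Indeed $ss=c\xor\phi$ with $\phi=\mathsf{\Omega}(w\xor e,N)$ and $c$ a deterministic encoding built from $w$ and $w\xor e$, so $ss$ is a deterministic function of $(w,e,N)$; an adversary can enumerate candidate values of $w$ (there are as few as two when $m=1$), recompute the corresponding sketches, and compare with $ss$. Nothing in Corollary~\ref{corollary:1} or Eq.~\ref{eq:16} rules this out, so the claimed residual entropy $m'=k-n^*\geq 1$ for every source with $\minentropy{W}\geq 1$ is unsupported.

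Your own attempted bridge also does not deliver what is needed: from ``the adversary learns at most $k^*-m'$ bits about $w$'' the correct conclusion is $\condminentropy{W}{SS}\geq m-(k^*-m')$, which is at least $m'$ only when $m\geq k^*$, i.e.\ essentially uniform sources---precisely the regime the proposition is supposed to go beyond. A smaller issue on side (ii): Proposition~\ref{proposition:3} only guarantees recovery probability $\geq 0.5$, i.e.\ $\beta=2^{-(k-n^*)}$ with $k-n^*\geq 1$, which is not the negligible $\beta$ demanded by the definition in Section~\ref{3.0}; the paper relabels this as $\beta\leq 1/\poly$ in Section~\ref{10.6}, but neither that nor your ``negligible when $n$ is large enough'' follows from the stated hypotheses, since $\beta$ is controlled by $k-n^*$ rather than by $n$ alone. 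So while your proposal faithfully reproduces the paper's argument, the security step you identify as the main obstacle is exactly where a real proof is missing.
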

\subsection{Reduction from Fuzzy min-entropy to Shannon entropy}{\label{12.2}}
As computational secure sketch implies information theoretic ones, it is natural to ask how our construction related to the usage of fuzzy min-entropy in defining the system security. Nonetheless, we here show that Shannon entropy is indeed a necessary and sufficient condition to show security for any sources with non-zero min-entropy. 

To support our claim that Shannon entropy is necessary and sufficient, we recall the statement saying that a list of possible RVs in different distributions $\lbrace{\Phi_1,\ldots,\Phi_{{|\mathsf{supp}(\mathcal{E}_{ss})|}}}\rbrace\in{\Psi}$ can be generated from a list of possible noisy strings over a family of distributions $\lbrace{W_1,\ldots,W_{{|\mathsf{supp}(\mathcal{E}_{ss})|}}}\rbrace\in{\mathcal{W}}$ after the random error parsing. Meanwhile, the original distribution of random variable $W$ where $w\in{W}$ is concealed in $\mathcal{W}$. Looking for $w$ is equivalent to look for $W$ over $\mathcal{W}$, which means meaningful security can be claimed by measuring the \textit{minimum} fuzzy min-entropy of every possible distribution for $W$ over $\mathcal{W}$ (after random error parsing). However, we argued that it is difficult to precisely model all possible distribution, especially for large entropy distributions. This is because large entropy sources usually accompany with large input size $k^*$ and large number of points $2^{m'}$. Modelling individual distribution of large number of points $2^{m'}$ require high computational cost and computationally inefficient. 

Arguing that the worst-case security should offer maximum probability in getting similar RVs within maximised tolerance $t_{\max}$ (see Section \ref{8.0}). We therefore characterize the maximum probability using the conditioned minimum entropy of the RVs (given the noisy strings). Such result can easily derived using Hoeffding bound described in Eq. \ref{eq:4}, we restate it here:
\begin{align*}
&\condminentropy{\Psi}{\mathcal{W}}=-\log\bigg({\expsub{W\leftarrow{\mathcal{W}}}{\condprob{\Psi=\Phi}{\mathcal{W}=W}}\nonumber}\bigg)\nonumber\\
&\geq{-\log\bigg(\expsub{w'\leftarrow{W'}}{\max\limits_{\phi'}\condprob{\Phi\in{B_{t_{\max}}(\phi')}}{W\not\in{B_{t'_{(+)}}(w')}}}\bigg)}\nonumber\\
&\geq{\log(1/\exp{(-2n{\epsilon_{ss}^2)}})}.
\end{align*}  
It is not difficult to see that above equation computes the \textit{minimum} fuzzy min-entropy among all possible RVs distributions (over $\Psi$) concerning to maximum tolerance distance $t_{\max}$ (conditioned on $t'_{(+)}$). Such result yielded a reduction from fuzzy min-entropy to minimum entropy measurement over the original distribution of the variable $W$ concerning $t'_{(+)}$, hold for all input string $w\in{W}$. 

However, merely showing reduction from fuzzy min-entropy to min-entropy is not enough, our goal is to reduce min-entropy to Shannon entropy. Such reduction is in fact done with the formalization of Proposition \ref{proposition:2} and \ref{proposition:3} with sufficiently $n$ and number of zeros padding $k-n^*$. These result eventually lead us to the derivation of the Shannon bound and GV bound in computational manner (Eq. \ref{eq:18} and \ref{eq:19}). Noting that the derived bounds on Eq. \ref{eq:18} and \ref{eq:19} have removed the $o(1)$ term compared to the actual bounds derived in \cite{varshamov1957estimate}
\cite{gilbert1952comparison} \cite{shannon1948mathematical}: 
\begin{align*}
&R\leq{1-h_2(p)}-o(1)\ (actual \ Shannon \  bound)\\
&R\geq{1-h_2(2p)}-o(1)\ (actual \ GV bound),
\end{align*}
this is mainly because of the exact bound we used for $p=\epsilon_{ss}\in{[(2k)^{-1},1/4]}$ (for both average and worst-case scenarios) rather than limiting $p\in{(0,1/4)}$. In light of this, Shannon entropy is necessary and sufficient condition to show meaningful security for a standard secure sketch.

\section{Comparison}\label{13.0}

\begin{center}
\begin{savenotes}
\begin{tabular}{|>{\raggedright\arraybackslash}p{3cm}| |>{\raggedright\arraybackslash}p{4cm}||>{\centering\arraybackslash}p{4.5cm}|  }
 \hline
 \multicolumn{3}{|c|}{Security Bound for Secure Sketch} \\
  \hline
{Computational}{{\ \ \ \ \ /Info-theoretic}} &Best possible security derived in \cite{fuller2016fuzzy} & $\minentropy{W}\geq{m}\geq\text{H}_{t,\infty}^
\text{fuzz}{(W)}-\log(1-\beta)$\\
 \hline
 \multirow{1}{*}{Computational}{{\ \ \ \ \ /Info-theoretic}} &FRS sketch(universal hash functions) ~\cite{fuller2016fuzzy} & {$\minentropy{W}\geq{m}\geq\text{H}_{t,\infty}^
\text{fuzz}{(W)}-\log(1/{\beta})-\log\log(\mathsf{supp}(W))-1$ }     \\   
 \hline
 \multirow{1}{*}{Computational}{\ \ \ \ \ /Info-theoretic} &Layer hiding hash  (strong universal hash function)\cite{woodage2017new} &   {$\minentropy{W}\geq{m}\geq\text{H}_{t,\infty}^
\text{fuzz}{(W)}-\log({1}/{\beta})-1$ }   \\
 \hline
\multirow{2}{*}{Info. theoretic} & {Fuzzy commitment with generic syndrome decoding \cite{juels1999fuzzy}} & {$\minentropy{W}\geq{t\log(n)}$ \ \ \ \ \ \ \ (when $t\ll{n}$) } \\ 
 \hline
\multirow{1}{*}{Info. theoretic} &{Fuzzy vault \cite{juels2006fuzzy}} &   {$\minentropy{W}\geq{t\log(n)}$ }   \\
 \hline
 \multirow{1}{*}{Info. theoretic} &{Improved Fuzzy vault \cite{dodis2004fuzzy}} &   {$\minentropy{W}>{t\log(n)+2}$ }   \\
 \hline
  \multirow{1}{*}{Info. theoretic} &{Pinsketch \cite{dodis2004fuzzy}} &   {$\minentropy{W}\geq{t\log(n+1)}$ }   \\
 \hline
  \multirow{1}{*}{Computational}{\ \ \ \ \ /info-theoretic} &\textbf{Proposed} & $\minentropy{W}\geq{m}\geq{{{{k^*h_2{(\epsilon_{rec})}}}}}=H(\mathcal{E}_{rec})$    \\
 \hline
\end{tabular}
\end{savenotes}
\end{center}
\begin{center}
Table 1: Summary of security bound of existing secure sketch in terms of fuzzy-min entropy (with $\beta>0$), min-entropy, and Shannon entropy. 
\end{center}

Table 1 depicted the security bound (upper bound min-entropy requirement) for existing secure sketch construction. Compared to the existing secure sketch constructions our construction is capable of \textit{claiming computational security for all sources with min-entropy $m\geq{1}$}.

An important highlight, we showed reduction from fuzzy-min entropy to Shannon entropy, hence our security can be simplified into merely depends upon the input error distribution defined using Shannon entropy, i.e., $m\geq{H(\mathcal{E}_{rec})}$. More precisely, by definition of min-entropy, $m$ never larger than $H(\mathcal{E}_{rec})$, thus $m=H(\mathcal{E}_{rec})$ to be exact, which means Shannon entropy is necessary and sufficient to describe $m$. Besides, it can be verified easily our derived security bound covered the best possible security stated in \cite{fuller2016fuzzy} with fuzzy min-entropy $\text{H}_{t,\infty}^
\text{fuzz}{(W)}\geq{1}$ (for $m\geq{1}$). 

\section{Conclusion}

Existing secure sketch constructions have shown limitation in providing security for sources with low entropy, i.e., lower than half of its input size. To overcome such limitation, recent approaches \cite{fuller2013computational}, \cite{woodage2017new} suggested to construct a secure sketch where its security property only holds for computationally bounded attacker. Such computational construction relies on fuzzy min-entropy measurement, accompanies with stringent requirement s.t. the user must have precise knowledge over the sources distribution. However, under the practical scenario, many sources, for instance biometric (human face, iris, and fingerprint) are difficult to model, hence assuming precise knowledge over such sources is unrealistic. 

In this work, we proposed an explicit construction for secure sketch. We adopted the usage of RV for sketching to facilitate the understanding of the input distribution. Besides, the noisy environment of any source with minimum entropy $m\geq{1}$ is stimulated by parsing random error to it. Our construction supports efficient recovery with bounded probability of error parametrized by the number of zero padding to the syndrome vector $v_{syn}$. The recovery algorithm works adversarially and efficiently under an event where the distribution of the random error $e\in\mathcal{E}_{ss}$ parsed into the input source is unknown. 

For security, we showed reduction from the usage of fuzzy min-entropy to Shannon entropy to claim security over any sources of min-entropy $m\geq{H(\mathcal{E}_{rec})}\geq{1}$. Our results allow the computational derivation of info-theoretic bound (Shannon bound), lead us to the calm over computational secure sketch implies info-theoretical secure sketch. 

\bibliographystyle{IEEEtran}
\bibliography{SS_HH_bib}

\end{document}